\newcounter{counter}
\DeclareMathOperator*{\E}{\mathbb{E}}
\newtheorem{defi}[counter]{Definition}
\newtheorem{thm}[counter]{Theorem}
\newtheorem{lem}[counter]{Lemma}
\newtheorem{cor}[counter]{Corollary}
\newtheorem{rem}[counter]{Remark}
\newcommand{\bra}[1]{\langle #1|}
\newcommand{\ket}[1]{|#1\rangle}
\definecolor{dgreen}{rgb}{.1,.5,.1}
\definecolor{grey}{rgb}{.4,.4,.4}
\newcommand{\proj}[1]{\ket{#1}\!\bra{#1}}
\newcommand{\sigp}{\textSigma-protocol\xspace}
\newcommand{\sigps}{\textSigma-protocols\xspace}
\def\regZ{\ensuremath{\textit{\textsf{Z}}}}
\def\regE{\ensuremath{\textit{\textsf{E}}}}
\def\regX{\ensuremath{\textit{\textsf{X}}}}
\newcommand{\Gen}{\ensuremath{\mathsf{Gen}}\xspace}
\newcommand{\Sign}{\ensuremath{\mathsf{Sign}}\xspace}
\newcommand{\Ver}{\ensuremath{\mathsf{Verify}}\xspace}
\newcommand*\widefbox[1]{\fbox{\hspace{2em}#1\hspace{2em}}}
\DeclareMathSymbol{\shortminus}{\mathbin}{AMSa}{"39}
\newcommand{\switch}[2]{#2}  
\title{The Measure-and-Reprogram Technique 2.0: Multi-Round Fiat-Shamir and More 
}
\author{\vspace{-1cm}}\institute{}
\author{Jelle Don\inst{1} \and  Serge Fehr\inst{1,2} \and Christian Majenz\inst{1,3} }
\institute{
	Centrum Wiskunde \& Informatica (CWI), Amsterdam, Netherlands \and 
	Mathematical Institute, Leiden University, Netherlands \and
	QuSoft, Amsterdam, Netherlands 
	\\ \email{jelle.don@cwi.nl}, \email{serge.fehr@cwi.nl}, \email{c.majenz@uva.nl}}
\begin{document}
	\maketitle	
	\setcounter{footnote}{0}

	\begin{abstract}

    We revisit recent works by Don, Fehr, Majenz and Schaffner and by Liu and Zhandry on the security of the Fiat-Shamir transformation of $\Sigma$-protocols in the quantum random oracle model (QROM). Two natural questions that arise in this context are: (1) whether the results extend to the Fiat-Shamir transformation of {\em multi-round} interactive proofs, and (2) whether Don et al.'s $O(q^2)$ loss in security is optimal.
    
    Firstly, we answer question (1) in the affirmative. As a byproduct of solving a technical difficulty in proving this result, we slightly improve the result of Don et al., equipping it with a cleaner bound and an even simpler proof. We apply our result to digital signature schemes showing that it can be used to prove strong  security for schemes like MQDSS in the QROM. As another application we prove QROM-security of a non-interactive OR proof by Liu, Wei and Wong. 
    
    As for question (2), we show via a Grover-search based attack that Don et al.'s quadratic security loss for the Fiat-Shamir transformation of $\Sigma$-protocols is optimal up to a small constant factor. This extends to our new multi-round result, proving it tight up to a factor that depends on the number of rounds only, i.e. is constant for any constant-round interactive proof.  
	\end{abstract}

	\section{Introduction}
	
	\subsubsection{Reprogramming the quantum random oracle. }
\def\eps{\varepsilon}
We reconsider the recent work of Don, Fehr, Majenz and Schaffner~\cite{DFMS19} on the quantum random oracle model (QROM). On a technical level, they showed how to reprogram the QROM adaptively at {\em one} input. More precisely, for any oracle quantum algorithm ${\cal A}^H$, making $q$ calls to a random oracle $H$ and outputting a pair $(x,z)$ so that some predicate $V(x,H(x),z)$ is satisfied, they showed existence of a ``simulator'' $\cal S$ that mimics the random oracle, extracts $x$ from ${\cal A}^H$ by measuring one of the oracle queries to $H$, and then reprograms $H(x)$ to a given value $\Theta$ so that $z$ output by ${\cal A}^H$ now satisfies $V(x,\Theta,z)$, except with a multiplicative $O(q^2)$ loss in probability (plus a negligible additive loss). We emphasize that the challenging aspect of this problem is that ${\cal A}^H$'s queries to $H$ may be in quantum superposition, and thus measuring such a query disturbs the state and thus the behavior of ${\cal A}^H$. Still, Don et al. managed to control this disturbance sufficiently. In independent work and using very different techniques, Liu and Zhandry~\cite{LZ19} showed a similar kind of result, but with a $O(q^9)$ loss. 

As an immediate application of this technique, it is then concluded that the Fiat-Shamir transformation of a $\Sigma$-protocol is as secure (in the QROM) as the original $\Sigma$-protocol (in the standard model), up to a $O(q^2)$ loss, i.e., any of the typically considered security notions is preserved under the Fiat-Shamir transformation, even in the quantum setting. In combination with prior work on simulating signature queries \cite{Unruh2017,Kiltz2017}, security (in the QROM) of Fiat-Shamir signatures that arise from ordinary $\Sigma$-protocols then follows as a corollary. 

Given important examples of {\em multi-round} public-coin interactive proofs, used in, e.g.,  MQDSS \cite{MQDSS} and for Bulletproofs~\cite{Bulletproofs}%
\footnote{The security of the original Bulletproofs protocol relies on the hardness of discrete-log; however, work in progress considers post-quantum secure versions~\cite{PQBP-talk}.},
 a natural question that arises is whether these techniques and results extend to the reprogrammability of the QROM at {\em multiple} inputs and the security of the Fiat-Shamir transformation (in the QROM) of {\em multi-round} public-coin interactive proofs. Another question is whether the $O(q^2)$ loss (for the original $\Sigma$-protocols) is optimal, or whether one might hope for a linear loss as in the classical case. 

In this work, we provide answers to both these natural questions\,---\,and more. 

\subsubsection{A technical hurdle for generalizing \cite{DFMS19} to multi-round Fiat-Shamir. }
To start with, we observe that the naive approach of applying the original result of \cite{DFMS19} inductively so as to reprogram multiple inputs one by one does not work
. This is due to a subtle technical issue that has to do with the precise statement of the original result. In more detail, the statement involves an additive error term $\eps_x \geq 0$ that depends on the particular choice of the point $x$, which is (adaptively) chosen to be the input on which the random oracle (RO) is reprogrammed. The guarantee provided by \cite{DFMS19}  is that this error term stays negligible even {\em when summed over} all $x$'s, i.e., $\sum_x \eps_x = negl$. The formulation of the result for individual $x$'s with control over $\sum_x \eps_x$ is important for the later applications to the Fiat-Shamir transformation. 
However, when applying the result twice in a row, with the goal being to reprogram the RO at two inputs $x_1,x_2$, then we end up with two error terms $\eps_{x_1}$ and $\eps^{x_1}_{x_2}$ (with the second one depending on $x_1$), where the first one stays negligible when summed over $x_1$ and the second one stays negligible when summed over $x_2$ (for any $x_1$); but it is unclear that the sum $\eps_{x_1,x_2} := \eps_{x_1} + \eps^{x_1}_{x_2}$ stays negligible when summed over $x_1$ {\em and} $x_2$, which is what we would need to get the corresponding generalized statement.

\subsubsection{Our results }
As a first contribution, we revise the {\em original} result from \cite{DFMS19} of reprogramming the QROM at one input by showing an {\em improved} version that has {\em no} additive error term, but only the original multiplicative $O(q^2)$ loss. For typical direct cryptographic applications, this improvement makes no big quantitative difference due to the error term being negligible, but: (1) it makes the statement cleaner and easier to formulate, (2) somewhat surprisingly, the proof is simpler than that of the original result in \cite{DFMS19}, and (3) most importantly, it removes the technical hurdle to extend to multiple inputs. Indeed, we then get the desired multi-input reprogrammability result by means of a not too difficult, though somewhat tedious, induction argument. 

Building on our multi-input reprogrammability result above, our next goal then is to show the security of the Fiat-Shamir transformation (in the QROM) of multi-round public-coin interactive proofs. In contrast to the original result in [DFMS19] for the Fiat-Shamir transformation of Σ-protocols some additional work is needed here, to deal with the order of the messages extracted from the Fiat-Shamir adversary. Thus, as a stepping stone, we consider and analyze a variant of the above multi-input reprogrammability result, which enforces the right order of the extracted messages. As a simple corollary of this, we then obtain the desired security of multi-round Fiat-Shamir. Here, the multiplicative loss becomes $O(q^{2n})$ for a $(2n+1)$-round public-coin interactive proof with constant $n$.

In the context of digital signatures, the original motivation for the Fiat-Shamir transformation, we extend previous results by Unruh \cite{Unruh2017} and Don et al. \cite{DFMS19} to show that Fiat-Shamir signature schemes based on a multi-round,  honest-verifier zero knowledge public-coin interactive quantum proof of knowledge have standard signature security (existential unforgeability under chosen message attacks, UF-CMA) in the QROM. Assuming the additional collision-resistance-like property of computationally unique responses, they are even strongly unforgeable. We go on to apply this result to the signature scheme MQDSS \cite{MQDSS}, a candidate in the ongoing NIST standardization process for post-quantum cryptographic schemes \cite{NIST}, providing its first QROM proof.

Another application of our multi-round Fiat-Shamir result would for instance be to Bulletproofs~\cite{Bulletproofs}.

As a second application of our multi-input reprogrammability result, we show security (in the QROM) of the non-interactive OR-proof introduced by Liu, Wei and Wong \cite{LWW04}, further analyzed by Fischlin, Harasser and Janson \cite{FHJ20}. While the well-known (interactive) OR-proof by Cramer, Damg\r{a}rd and Schoenmakers \cite{CDS94} is a $\Sigma$-protocol and thus the results from \cite{DFMS19} apply, the inherently non-interactive OR-proof by Liu et al. does {\em not} follow this blueprint of being obtained as the Fiat-Shamir transformation of a $\Sigma$-protocol (though in some sense it is ``close'' to being of this form).  
We show here how the $2$-input version of our multi-input reprogrammability result implies security of this OR-proof in the QROM. 

Our last contribution is a lower bound that shows that the multiplicative $O(q^2)$ loss in the security argument of the Fiat-Shamir transformation of $\Sigma$-protocols is tight (up to a factor $4$). Thus, the $O(q^2)$ loss is unavoidable in general. Furthermore, we extend this lower bound to the Fiat-Shamir transformation of multi-round interactive proofs as considered in this work, and we show that also here to obtained loss $O(q^{2n})$ is in general optimal, up to a constant that depends on $n$ only. 

\subsubsection{Related work}
Before the recently obtained reduction \cite{DFMS19,LZ19} was available, the Fiat-Shamir tranform in the QROM was studied in a number of works \cite{Unruh2017,Dagdelen,Kiltz2017}, where weaker security properties were shown. In addition, Unruh developed an alternative transform \cite{Unruh2015} that provided QROM security at the expense of an increased proof size. The Unruh transform was later generalized to apply to 5-round public coin interactive proof systems \cite{SOFIA}.

	\section{Notation}

	Up to some modifications, we follow closely the notation used in~\cite{DFMS19}. 
	We consider a (purified) oracle quantum algorithm $\cal A$ that makes $q$ queries to an {\em oracle}, i.e., an unspecified function $H: {\cal X} \to {\cal Y}$ with finite non-empty sets ${\cal X},{\cal Y}$.   
	Formally, $\cal A$ is described by a sequence of unitaries $A_1,\ldots,A_q$ and an initial state $\ket{\phi_0}$.%
	\footnote{Alternatively, we may regard $\ket{\phi_0}$, as an additional input given to $\cal A$. }
	For technical reasons that will become clear later, we actually allow (some of) the $A_i$'s to be a {\em projection} followed by a unitary (or vice versa). One can think of such a projection as a measurement performed by the algorithm, with the algorithm aborting except in case of a particular measurement outcome. 	
	
	For any concrete choice of $H: {\cal X} \to {\cal Y}$, the algorithm $\cal A$ computes the state
	$$
	\ket{\phi_q^H} := {\cal A}^H \ket{\phi_0} := A_q\mathcal{O}^H \cdots A_1\mathcal{O}^H \ket{\phi_0} \, ,
	$$ 
	where $\mathcal{O}^H$ is the unitary defined by $\mathcal{O}^H : \ket{c}\ket{x}\ket{y} \mapsto \ket{c}\ket{x}\ket{y \oplus c \!\cdot\! H(x)}$ for any triple $c \in \{0,1\}$, $x \in \cal X$ and $y \in \cal Y$, with $\mathcal{O}^H$ acting on appropriate registers. 
	We emphasize that we allow {\em controlled} queries to $H$. Per se, this gives the algorithm more power, and thus will make our result only stronger, but it is easy to see that such controlled queries to the standard quantum oracle for a function can always be simulated by means of ordinary queries, at the price of one additional query.\footnote{Allowing controlled queries to the random oracle is also the more natural model compared to restricting to plain access to the unitary. After all, the motivation for the QROM is that in the real world, an attacker can implement the modeled hash function on their quantum computer, so they can definitely implement the controlled version as well.}
	The final state ${\cal A}^H \ket{\phi_0}$ is considered to be a state over registers $\regX = \regX_1\ldots\regX_n$, $\regZ$ and $\regE$. 
	
	Following~\cite{DFMS19}, we introduce the following notation. For $0 \leq i,j \leq q$ we set 
	$$
	\mathcal{A}_{i\rightarrow j}^H := A_{j}\mathcal{O}^H \cdots A_{i+1}\mathcal{O}^H \, ,
	$$
	where, by convention, $\mathcal{A}_{i\rightarrow j}^H$ is set to $\mathbb{1}$ if $j \leq i$. Furthermore, we let
	$$
	\ket{\phi_i^H} := \big(\mathcal{A}_{0\rightarrow i}^H\big)\ket{\phi_0} 
	$$ 
	be the state of $\cal A$ after the $i$-th step but right before the $(i+1)$-st query, which is consistent with $\ket{\phi_q^H}$ above. 
	
	For a given function $H: {\cal X} \to {\cal Y}$ and for fixed $x \in {\cal X}$ and $\Theta \in {\cal Y}$, we define the {\em reprogrammed} function $H\!*\!\Theta x: {\cal X} \to {\cal Y}$ that coincides with $H$ on ${\cal X} \setminus \{x\}$ but maps $x$ to $\Theta$. With this notation at hand, we can then write
	$$
	\big(\mathcal{A}_{i\rightarrow q}^{H*\Theta x}\big) \, \big(\mathcal{A}_{0\rightarrow i}^{H}\big) \, \ket{\phi_0} = \big(\mathcal{A}_{i\rightarrow q}^{H*\Theta x}\big)\ket{\phi_i^H}
	$$
	for an execution of $\cal A$ where the oracle is reprogrammed at a given point $x$ after the $i$-th query. We stress that $(\mathcal{A}_{i\rightarrow q}^{H*\Theta x}) (\mathcal{A}_{0\rightarrow i}^{H})$ can again be considered to be an oracle quantum algorithm $\cal B$, which depends on $\Theta \in {\cal Y}$, that makes $q$ queries to (the unprogrammed) function $H$. Indeed, the (controlled) queries to the reprogrammed oracle ${H*\Theta x}$ can be simulated by means of controlled queries to $H$ (using one additional ``work qubit'').%
	\footnote{Here it is crucial that we allow {\em controlled}  queries to $H$. }
	Exploiting that, in addition to unitaries, we allow projections as elementary operations, we can also understand $(\mathcal{A}_{i\rightarrow q}^{H*\Theta x}) X (\mathcal{A}_{0\rightarrow i}^{H})$ to be an oracle quantum algorithm again that makes oracle queries to $H$, where $X$ is the projection $X = \proj{x}$, acting on the corresponding query register to the oracle. 
	
	More generally, for any ${\mathbf x} = (x_1,\ldots,x_n)\in {\cal X}^n$ \emph{without duplicate entries}, i.e., $x_i \neq x_j$ for $i \neq j$, and for any ${{\mathbf \Theta}}\in {\cal Y}^n$, we define
	\begin{align*}
	&H*{\mathbf \Theta \mathbf x} = H*{\Theta_1 x_1}*\cdots*{\Theta_n x_n} : \, {\cal X} \to {\cal Y}\\
	&x \mapsto 
	\begin{cases}
	\Theta_i &\text{if $x = x_i$ for some $i \in \{1,\ldots,n\}$} \\
	H(x) &\text{otherwise}.
	\end{cases}
	\end{align*}
This will then allow us to consider $(\mathcal{A}_{i_2\rightarrow q}^{H*\Theta_1 x_1 * \Theta_2 x_2}) X_2 (\mathcal{A}_{i_1\rightarrow i_2}^{H*\Theta_1 x_1}) X_1 (\mathcal{A}_{0\rightarrow i_1}^{H})$ as an oracle quantum algorithm with oracle queries to $H$, etc.

	
	Eventually, we are interested in the probability that after the execution of the original algorithm ${\cal A}^H$, and upon measuring register $\regX$ in the computational basis to obtain ${\mathbf x} = (x_1,\ldots,x_n) \in {\cal X}^n$, the state of register $\regZ$ is of a certain form dependent on ${\mathbf x}$ and $H({\mathbf x}) = (H(x_1),\ldots,H(x_n))$.  
	Such a requirement (for a fixed ${\mathbf x}$) is captured by a projection 
	$$
	G_{\bf x}^{H} = \proj{{\mathbf x}} \otimes \Pi_{{\mathbf x},H({\mathbf x})} \, ,
	$$
	where $\{\Pi_{{\mathbf x},{\mathbf \Theta}}\}_{{\mathbf x},{\mathbf \Theta}}$ is a family of projections with ${\mathbf x} \in {\cal X}^n$ and ${\mathbf \Theta} \in {\cal Y}^n$, and with the understanding that $\proj{{\mathbf x}}$ acts on $\regX$ and $\Pi_{{\mathbf x},H({\mathbf x})}$ on register $\regZ$. We refer to such a family of projections as a \emph{quantum predicate}. 
	We use $G_{\mathbf x}^{\mathbf \Theta}$ as a short hand for $G_{\mathbf x}^{H*{\mathbf \Theta}{\mathbf x}}$, and we write $G_x^H$ and $G_x^{\Theta}$ with $x \in \cal X$ and $\Theta \in \cal Y$ for the case $n = 1$.
	
	For an arbitrary but fixed ${\bf x}_\circ \in {\cal X}^n$, we are then interested in the probability
	$$
	\Pr\bigr[\,{\bf x}\!=\! {\bf x}_\circ \wedge V({\bf x},H({\bf x}),z) : ({\bf x},z) \leftarrow {\cal A}^H \,\bigl]\, = \bigl\|G_{{\bf x}_\circ}^H \ket{\phi_q^H}\bigr\|_2^2 \, .
	$$
	where the left hand side is our notation for this probability, where we understand ${\cal A}^H$ to be an algorithm that outputs the measured $\bf x$ together with the quantum state $z$ in register $\regZ$, and $V$ to be the quantum predicate specified by the projections $\Pi_{{\mathbf x},{\mathbf\Theta}}$. Correspondingly, $\Pr\bigr[ x\!=\!  x_\circ \wedge V(x,H(x),z) : (x,z) \leftarrow {\cal A}^H \bigl]\, = \|G_{x_\circ}^H \ket{\phi_q^H}\|_2^2$ for the $n=1$ case.

	\section{An improved single-input reprogramming result}\label{secmainresult}
	
	For the case $n=1$, Don et al.~\cite{DFMS19} show the existence of a black-box {\em simulator} $\cal S$ such that for any oracle quantum algorithm $\cal A$ as considered above with oracle access to a {\em uniformly random} $H$, it holds that 
	\begin{align}\begin{split}
		\Pr_\Theta\bigr[&x\!=\! x_\circ \wedge V(x,\Theta,z) : (x,z) \leftarrow \langle{\cal S}^{\cal A} , \Theta\rangle\bigl]
		\\ &
		\geq \frac{1}{2(q\!+\!1)(2q\!+\!3)} \Pr_H\bigl[x\!=\! x_\circ \wedge V(x,H(x),z) : (x,z) \leftarrow {\cal A}^{H} \bigr] - \eps_{x_\circ} \, ,
		\end{split}\label{eq:old}\end{align}%
    for any $x_\circ \in \cal X$, where the $\eps_{x_\circ}$'s are non-negative and their sum over $x_\circ \in \cal X$ is bounded by $1/(2q|{\cal Y}|)$, i.e., negligible whenever $|{\cal Y}|$ is superpolynomial. The notation $(x,z) \leftarrow \langle{\cal S}^{\cal A} , \Theta\rangle$ is to be understood in that in a first stage ${\cal S}^{\cal A}$ outputs $x$, and then on input $\Theta$ it outputs $z$. 
    At the core, Equation \eqref{eq:old} follows from Lemma~1 of~\cite{DFMS19} which shows that
  		\begin{align}\begin{split}
		\E_{\Theta,i,b}&\left[\big\|(\proj{x} \otimes \Pi_{x,\Theta}) \big(\mathcal{A}_{i+b\rightarrow q}^{H*\Theta x}\big)\big(\mathcal{A}_{i\rightarrow i+b}^{H}\big)X\ket{\phi_i^H}\big\|_2^2\right] \\
		&\geq  \frac{\E_{\Theta}\Bigl[\big\|(\proj{x} \otimes \Pi_{x,\Theta})\ket{\phi_q^{H*\Theta x}}\big\|_2^2\Bigr]}{2(q+1)(2q+3) } - \frac{\big\|X\ket{\phi_q^H}\big\|_2^2}{2(q+1)|{\cal Y}| } \, ,
		\end{split}
		\label{eq:oldtechn}
		\end{align}
    and from which the construction of $\cal S$ can be extracted. The bound \eqref{eq:old} on the ``success probability'' of $\cal S$ then follows from the observation that $\cal S$ can simulate the calls to $H$ and to $H\!*\!\Theta x$ by means of a $2(q\!+\!1)$-wise independent hash function, and that $H$ and $H\!*\!\Theta x$ are indistinguishable for random $H$ and $\Theta$. 
    
    In this section we show an improved variant of Equation \eqref{eq:old}, which avoids the additive error term $\eps_{x_\circ}$. While having negligible quantitative effect in typcial situations, it makes the statement simpler. In addition, as explained in the introduction, it circumvents a technical issue one encounters when trying to extend to the multi-input case. Furthermore, our improved version comes with a simpler proof.%
    \footnote{We thank Dominique Unruh for the idea that it might be possible to  avoid the additive error term, and for proposing an argument for achieving that, which inspired us to find the simpler argument we eventually used. }
	
	The approach is to avoid the additive error term in Equation \eqref{eq:oldtechn}. We achieve this by  slightly tweaking the simulator $\cal S$. From the technical perspective, while on the left hand side of Equation \eqref{eq:oldtechn} the expectation is over a random $i \in \{0,\ldots,q\}$, selecting one of the $q+1$ queries of $\cal A$ at random (where the $\regX$ register of the output state is considered to be a final query), and a random $b \in \{0,1\}$, our new version has syntactically the same left hand side, but with the expectation over a random pair $(i,b) \in (\{0,\ldots,q\!\shortminus\!1\}\times \{0,1\})\cup \{(q,0)\}$ instead. This allows us to absorb the additive error term into the success probability of the simulator. Furthermore, it holds for any {\em fixed} choice of $\Theta$ (and not only on average for a random choice).

%
%
	
	\begin{lem}
		\label{lem:mainresult}
		Let $\cal A$ be a $q$-query oracle quantum algorithm. Then, for any function $H: {\cal X}\rightarrow {\cal Y}$, any $x \in \cal X$ and $\Theta \in \cal Y$, and any projection $\Pi_{x,\Theta}$, it holds that
		\begin{align*}
		\E_{i,b}\left[\big\|(\proj{x} \otimes \Pi_{x,\Theta})\big(\mathcal{A}_{i+b\rightarrow q}^{H*\Theta x}\big)\big(\mathcal{A}_{i\rightarrow i+b}^{H}\big)X\ket{\phi_i^H}\big\|_2^2\right] &\!\geq\! \frac{\big\|(\proj{x} \otimes \Pi_{x,\Theta}) \ket{\phi_q^{H*\Theta x}}\big\|_2^2}{(2q+1)^2 } \, ,
		\end{align*}
		where the expectation is over uniform $(i,b) \in (\{0,\ldots,q\!\shortminus\!1\}\times \{0,1\})\cup \{(q,0)\} $.%
	\end{lem}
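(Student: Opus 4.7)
The plan is to derive the bound from a simple linear-combination identity and Cauchy--Schwarz, with no averaging over $\Theta$ and no additive error. Concretely, setting $G := \proj{x}\otimes \Pi_{x,\Theta}$ and $\tilde\psi_{i,b} := (\mathcal{A}_{i+b\to q}^{H*\Theta x})(\mathcal{A}_{i\to i+b}^{H}) X \ket{\phi_i^H}$ for $(i,b)\in S$, I aim to prove the identity
\[
G\ket{\phi_q^{H*\Theta x}} \;=\; G\tilde\psi_{q,0} \;+\; \sum_{i=0}^{q-1}\bigl(G\tilde\psi_{i,0}-G\tilde\psi_{i,1}\bigr).
\]
Since this writes the left-hand side as a sum of $|S|=2q+1$ vectors with coefficients of modulus one, Cauchy--Schwarz immediately gives $\|G\ket{\phi_q^{H*\Theta x}}\|_2^2 \le (2q+1)\sum_{(i,b)\in S}\|G\tilde\psi_{i,b}\|_2^2$, and dividing by $|S|$ to pass to the uniform expectation yields the lemma.

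To obtain the identity, I would first introduce the hybrid states $\ket{\Psi_i} := (\mathcal{A}_{i\to q}^{H*\Theta x})(\mathcal{A}_{0\to i}^{H})\ket{\phi_0}$ for $0\le i\le q$, so that $\ket{\Psi_0}=\ket{\phi_q^{H*\Theta x}}$ and $\ket{\Psi_q}=\ket{\phi_q^H}$, and telescope: $\ket{\phi_q^{H*\Theta x}} - \ket{\phi_q^H} = \sum_{i=0}^{q-1}(\ket{\Psi_i}-\ket{\Psi_{i+1}})$. Isolating the $(i\!+\!1)$-st oracle call in the hybrid step, each difference equals $\mathcal{A}_{i+1\to q}^{H*\Theta x} A_{i+1}\bigl[\mathcal{O}^{H*\Theta x}-\mathcal{O}^{H}\bigr](\mathcal{A}_{0\to i}^H)\ket{\phi_0}$. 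The central algebraic observation is that $\mathcal{O}^{H*\Theta x}$ and $\mathcal{O}^H$ agree on every basis state whose input register is not in state $\ket{x}$, so the operator difference is unchanged by appending $X$ on the right. After this substitution the two resulting summands are exactly $\tilde\psi_{i,0}$ and $\tilde\psi_{i,1}$, giving $\ket{\Psi_i}-\ket{\Psi_{i+1}} = \tilde\psi_{i,0}-\tilde\psi_{i,1}$.

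The residual boundary term $\ket{\phi_q^H}$ is absorbed by the extra index $(q,0)\in S$. For this index, $X=\proj{x}$ acts on the output register $\regX$ (no further query occurs), and since $G$ already contains $\proj{x}_\regX$, we have $GX=G$; hence $G\ket{\phi_q^H} = GX\ket{\phi_q^H} = G\tilde\psi_{q,0}$. Applying $G$ to the telescoping relation and substituting this identity completes the derivation of the displayed identity, and the lemma then follows by Cauchy--Schwarz.

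The main subtlety I anticipate is the careful justification that $\mathcal{O}^{H*\Theta x} - \mathcal{O}^H = (\mathcal{O}^{H*\Theta x} - \mathcal{O}^H)X$ in the presence of the control qubit $c$: both oracles act trivially for $c=0$ and differ only by XOR-ing $H(x)$ versus $\Theta$ on inputs with $c=1$ and input register in state $\ket{x}$, so the identity holds but must be written out once. Conceptually, the key novelty compared to the original DFMS bound is the asymmetric inclusion of $(q,0)$: this extra ``slot'', together with the identity $GX=G$ on the output register, is precisely what removes the additive error term $\eps_{x_\circ}$ that appears in the original statement.
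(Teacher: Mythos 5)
Your proposal is correct and takes essentially the same route as the paper's proof: the same hybrid states, the same observation that $\mathcal{O}^{H*\Theta x}$ and $\mathcal{O}^H$ agree off the query-input state $\ket{x}$, and the same absorption of the boundary term $\ket{\phi_q^H}$ into the extra slot $(q,0)$ via $GX=G$ on the output register. The only (cosmetic) difference is that you telescope at the level of an exact vector identity and apply Cauchy--Schwarz once to the resulting sum of $2q+1$ vectors, whereas the paper applies the triangle inequality to each hybrid step and then Jensen's inequality after squaring --- two equivalent organizations of the same estimate yielding the same constant $(2q+1)^2$.
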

	
	This new version of Equation \eqref{eq:oldtechn} translates to a simulator $\mathcal{S}$ that works by running $\mathcal{A}$, but with the following modifications. First, one of the $q+1$ queries of $\mathcal{A}$ (also counting the final output in register~$\regX$) is measured, and the measurement outcome $x$ is output by (the first stage of) $\mathcal{S}$. We emphasize that the crucial difference to \cite{DFMS19} is that each of the $q$ actual queries is picked with probability \smash{$\frac{2}{2q+1}$}, while the final output is picked with probability \smash{$\frac{1}{2q+1}$}. 
	Then, very much as in \cite{DFMS19}, this very query of $\cal A$ is answered either using the original $H$ {\em or} using the reprogrammed oracle $H\!*\!\Theta x$, with the choice being made at random\footnote{If it is the final output that is measured then there is nothing left to reprogram, so no choice has to be made. }, while all the remaining queries of $\cal A$ are answered using oracle $H\!*\!\Theta x$. Finally, (the second stage of) $\cal S$ outputs whatever $\mathcal{A}$ outputs. 
	
	In line with Theorem~1 in~\cite{DFMS19}, i.e.~Equation \eqref{eq:old} above, we obtain the following result from Lemma~\ref{lem:mainresult}. 
		\begin{thm}[Measure-and-reprogram, single input]\label{thm:main} 
		Let ${\cal X}$ and ${\cal Y}$ be finite non-empty sets. 
		There exists a black-box  two-stage quantum algorithm $\cal S$ with the following property. 
		Let $\cal A$ be an arbitrary oracle quantum algorithm that makes $q$ queries to a uniformly random $H: {\cal X}\rightarrow {\cal Y}$ and that outputs some $x \in {\cal X}$ and a (possibly quantum) output~$z$. Then, the two-stage algorithm ${\cal S}^{\cal A}$ outputs some $x \in {\cal X}$ in the first stage and, upon a random $\Theta \in {\cal Y}$ as input to the second stage, a (possibly quantum) output~$z$, so that for any \mbox{$x_\circ\in {\cal X}$} and any (possibly quantum) predicate $V$:
		\begin{align*}
		\Pr_\Theta\bigr[x\!=\! x_\circ& \wedge V(x,\Theta,z) : (x,z) \leftarrow \langle{\cal S}^{\cal A} , \Theta\rangle\bigl] 
		\switch{\;}{\\ &}
		\geq \frac{1}{(2q+1)^2} \Pr_H\bigl[x\!=\! x_\circ \wedge V(x,H(x),z) : (x,z) \leftarrow {\cal A}^{H} \bigr] \, .
		\end{align*}%
		Furthermore, $\cal S$ runs in time polynomial in $q$, $\log|\mathcal X|$ and $\log|\mathcal Y|$.
	\end{thm}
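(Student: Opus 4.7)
The plan is to lift Lemma~\ref{lem:mainresult} into the claimed simulator statement by averaging over uniform $H$ and $\Theta$ and then derandomising the internally-simulated oracle. I would instantiate $\mathcal S$ as in the prose following Lemma~\ref{lem:mainresult}: it samples a pair $(i,b)$ uniformly from $(\{0,\ldots,q\!\shortminus\!1\}\times\{0,1\})\cup\{(q,0)\}$ and a fresh $2q$-wise independent hash function $h:\mathcal X\to\mathcal Y$; its first stage runs $\mathcal A$ for the first $i$ query-layers $A_k\mathcal O^h$, and then measures the $(i\!+\!1)$-st query register in the computational basis (or register $\regX$ of $\mathcal A$'s final state when $i=q$) to produce the first-stage output $x$; upon receiving $\Theta$, the second stage runs $b$ further query-layers with the un-reprogrammed $h$, then the remaining $q\!-\!i\!-\!b$ layers with the reprogrammed oracle $h\!*\!\Theta x$, and finally returns register $\regZ$ as $z$.

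For any fixed $x_\circ\in\mathcal X$, any fixed $\Theta\in\mathcal Y$, and any fixed value $h=H$ of the hash function, inspection shows that the probability that $\mathcal S$ outputs $x=x_\circ$ together with a $z$ satisfying $V(x_\circ,\Theta,z)$ equals
\[
\E_{i,b}\!\left[\big\|(\proj{x_\circ}\otimes\Pi_{x_\circ,\Theta})\big(\mathcal A_{i+b\rightarrow q}^{H*\Theta x_\circ}\big)\big(\mathcal A_{i\rightarrow i+b}^{H}\big)\,X\ket{\phi_i^H}\big\|_2^2\right],
\]
with $X=\proj{x_\circ}$ acting on the appropriate register (the $(i\!+\!1)$-st query register when $i<q$, or $\regX$ when $i=q$, accommodated by the convention $\mathcal A^H_{q\rightarrow q}=\mathbb 1$). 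This is precisely the quantity bounded from below by Lemma~\ref{lem:mainresult} applied with $x=x_\circ$, yielding $\|G_{x_\circ}^{\Theta}\ket{\phi_q^{H*\Theta x_\circ}}\|_2^2/(2q+1)^2$.

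To finish, I would average this inequality over $H$ and $\Theta$ both uniform. The map $(H,\Theta)\mapsto(H\!*\!\Theta x_\circ,\,H(x_\circ))$ is a measure-preserving bijection on $\mathcal Y^{\mathcal X}\times\mathcal Y$, so writing $H':=H\!*\!\Theta x_\circ$ and noting that $\Theta=H'(x_\circ)$ forces both $G_{x_\circ}^{\Theta}=G_{x_\circ}^{H'}$ and $\ket{\phi_q^{H*\Theta x_\circ}}=\ket{\phi_q^{H'}}$, the averaged lower bound collapses to $\E_{H'}\!\big[\|G_{x_\circ}^{H'}\ket{\phi_q^{H'}}\|_2^2\big]/(2q+1)^2=\Pr_H[x\!=\!x_\circ\wedge V(x,H(x),z)]/(2q+1)^2$, matching the theorem. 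Finally, Zhandry's result that any $q$-query quantum algorithm has identical output distribution whether its oracle is truly uniform or drawn from a $2q$-wise independent family lets me replace the idealised uniform $H$ in the analysis by the $h$ actually used by $\mathcal S$ without any loss; the time bound follows because sampling and evaluating the hash cost $\mathrm{poly}(q,\log|\mathcal X|,\log|\mathcal Y|)$. The only delicate point I foresee is carefully handling the $(i,b)=(q,0)$ corner case, where the measurement falls on the output register $\regX$ and no reprogramming takes place, so that the identification with the left-hand side of Lemma~\ref{lem:mainresult} goes through syntactically.
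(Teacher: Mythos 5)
Your proposal is correct and follows essentially the same route as the paper: it instantiates the simulator exactly as described in the prose after Lemma~\ref{lem:mainresult}, identifies its success probability for fixed $H,\Theta$ with the left-hand side of that lemma, averages using the fact that $(H,\Theta)\mapsto(H*\Theta x_\circ,H(x_\circ))$ is measure-preserving, and invokes Zhandry's small-range/$2q$-wise-independence argument for efficiency. The only (immaterial) deviation is the choice of $2q$-wise versus $2(q+1)$-wise independence, a point on which the paper itself offers both options.
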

	The proof of Lemma~\ref{lem:mainresult} follows closely the proof of Equation \eqref{eq:old} in~\cite{DFMS19}, but the streamlined statement and simulator allow to cut some corners. 
	
	\begin{proof}[of Lemma~\ref{lem:mainresult}]
		For any $0\leq i \leq q$, inserting a resolution of the identity and exploiting that
		$$
		\big(\mathcal{A}_{i+1\rightarrow q}^{H*\Theta x}\big)\big(\mathcal{A}_{i\rightarrow i+1}^H\big)\big(\mathbb{1}-X\big)\ket{\phi_i^H} = \big(\mathcal{A}_{i\rightarrow q}^{H*\Theta x}\big)\big(\mathbb{1}-X\big)\ket{\phi_i^H} \, ,
		$$ 
		we can write
		\switch{
			\begin{align*}
			\big(\mathcal{A}_{i+1\rightarrow q}^{H*\Theta x}\big)\ket{\phi_{i+1}^H} 
			&=  \big(\mathcal{A}_{i+1\rightarrow q}^{H*\Theta x}\big)\big(\mathcal{A}_{i\rightarrow i+1}^H\big)\big(\mathbb{1}-X\big)\ket{\phi_i^H} \hspace{-8ex}& +\: \big(\mathcal{A}_{i+1\rightarrow q}^{H*\Theta x}\big)\big(\mathcal{A}_{i\rightarrow i+1}^H\big)X\ket{\phi_i^H} \notag\\
			&= \big(\mathcal{A}_{i\rightarrow q}^{H*\Theta x}\big)\big(\mathbb{1}-X\big)\ket{\phi_i^H} \hspace{-8ex}\!\!&\!\! +\: \big(\mathcal{A}_{i+1\rightarrow q}^{H*\Theta x}\big)\big(\mathcal{A}_{i\rightarrow i+1}^H\big)X\ket{\phi_i^H}\\
			&= \big(\mathcal{A}_{i\rightarrow q}^{H*\Theta x}\big)\ket{\phi_i^H} - \big(\mathcal{A}_{i\rightarrow q}^{H*\Theta x}\big)X\ket{\phi_i^H} \hspace{-8ex}\!\!&\!\! +\: \big(\mathcal{A}_{i+1\rightarrow q}^{H*\Theta x}\big)\big(\mathcal{A}_{i\rightarrow i+1}^H\big)X\ket{\phi_i^H}
			\end{align*}
		}{
			\begin{align*}
			&\big(\mathcal{A}_{i+1\rightarrow q}^{H*\Theta x}\big)\ket{\phi_{i+1}^H} &\\
			&\hspace{30pt}=  \big(\mathcal{A}_{i+1\rightarrow q}^{H*\Theta x}\big)\big(\mathcal{A}_{i\rightarrow i+1}^H\big)\big(\mathbb{1}-X\big)\ket{\phi_i^H} \!\!&\!\! +\: \big(\mathcal{A}_{i+1\rightarrow q}^{H*\Theta x}\big)\big(\mathcal{A}_{i\rightarrow i+1}^H\big)X\ket{\phi_i^H} \notag\\
			&\hspace{30pt}= \big(\mathcal{A}_{i\rightarrow q}^{H*\Theta x}\big)\big(\mathbb{1}-X\big)\ket{\phi_i^H} \!\!&\!\! +\: \big(\mathcal{A}_{i+1\rightarrow q}^{H*\Theta x}\big)\big(\mathcal{A}_{i\rightarrow i+1}^H\big)X\ket{\phi_i^H}\\
			&\hspace{30pt}= \big(\mathcal{A}_{i\rightarrow q}^{H*\Theta x}\big)\ket{\phi_i^H} - \big(\mathcal{A}_{i\rightarrow q}^{H*\Theta x}\big)X\ket{\phi_i^H} \!\!&\!\! +\: \big(\mathcal{A}_{i+1\rightarrow q}^{H*\Theta x}\big)\big(\mathcal{A}_{i\rightarrow i+1}^H\big)X\ket{\phi_i^H}
			\end{align*}
		}
		Rearranging terms, applying $G_{x}^\Theta = (\proj{x} \otimes \Pi_{x,\Theta})$ and using the triangle equality, we can thus bound
		\begin{align*}
		\big\| G_{x}^\Theta  \big(\mathcal{A}_{i\rightarrow q}^{H*\Theta x}\big) \ket{\phi_i^H} \big\|_2 
		\leq \big\| G_{x}^\Theta & \big(\mathcal{A}_{i+1\rightarrow q}^{H*\Theta x}\big)\ket{\phi_{i+1}^H}\big\|_2 \\& + \big\| G_{x}^\Theta \big(\mathcal{A}_{i\rightarrow q}^{H*\Theta x}\big)X\ket{\phi_i^H}\big\|_2\\ & \qquad + \big\| G_{x}^\Theta \big(\mathcal{A}_{i+1\rightarrow q}^{H*\Theta x}\big)\big(\mathcal{A}_{i\rightarrow i+1}^H\big)X\ket{\phi_i^H}\big\|_2 \, .
		\end{align*}
		Summing up the respective sides of the inequality over $i=0,\ldots,q-1$, we get
		\begin{equation*}
		\big\| G_x^\Theta\ket{\phi_{q}^{H*\Theta x}}\big\|_2 \:\leq\: \big\|  G_x^\Theta\ket{\phi_{q}^H}\big\|_2 + \!\!\!\sum_{\substack{0\leq i < q \\ b\in \{0,1\}}}\!\!\! \big\| G_x^\Theta \big(\mathcal{A}_{i+b\rightarrow q}^{H*\Theta x}\big)\big(\mathcal{A}_{i\rightarrow i+b}^H\big)X\ket{\phi_i^H}\big\|_2 \, .\label{eqnsquarthis}
		\end{equation*}
		By squaring both sides, dividing by $2q+1$ (i.e., the number of terms on the right hand side), and using Jensen's inequality on the right hand side, we obtain 
		$$
		\frac{\big\| G_x^\Theta\ket{\phi_{q}^{H*\Theta x}}\big\|_2^2}{2q+1} \leq \big\| G_x^\Theta\ket{\phi_{q}^H}\big\|_2^2 + \!\!\!\sum_{\substack{0\leq i < q \\ b\in \{0,1\}}}\!\!\!\big\| G_x^\Theta \big(\mathcal{A}_{i+b\rightarrow q}^{H*\Theta x}\big)\big(\mathcal{A}_{i\rightarrow i+b}^H\big)X\ket{\phi_i^H}\big\|_2^2  
		$$
		and thus, noting that we can write $\big\| G_x^\Theta\ket{\phi_{q}^H}\big\|_2^2$ as 
		$$\big\|G_x^\Theta \big(\mathcal{A}_{i+b\rightarrow q}^{H*\Theta x}\big)\big(\mathcal{A}_{i\rightarrow i+b}^H\big)X\ket{\phi_i^H}\big\|_2^2$$
		 with $i=q$ and $b=0$,
		\begin{equation*}\label{eq:intermediate0}
		 \frac{\big\| G_x^\Theta\ket{\phi_{q}^{H*\Theta x}}\big\|_2^2}{(2q+1)^2}\;\leq\; \E_{i,b}\left[\big\| G_x^\Theta \big(\mathcal{A}_{i+b\rightarrow q}^{H*\Theta x}\big)\big(\mathcal{A}_{i\rightarrow i+b}^H\big)X\ket{\phi_i^H}\big\|_2^2\right] \, .
		\end{equation*}
	\qed\end{proof}

For completeness, let us spell out how Theorem~8 of~\cite{DFMS19} on the generic security of the Fiat-Shamir transformation (in the QROM) can now be re-phrased, avoiding the negligible error term present in \cite{DFMS19}. We refer to~\cite{DFMS19} or to our later Section~\ref{sec:mFS} for the details on the Fiat-Shamir transformation. 

		\begin{thm}\label{thm:FS} 
		There exists a black-box quantum polynomial-time two-stage quantum algorithm $\cal S$ such that for any adaptive Fiat-Shamir adversary $\cal A$, making $q$ queries to a uniformly random function $H$ with appropriate domain and range, and for any $x_\circ \in {\cal X}$: 
		\begin{align*}
		\Pr\bigr[x\!=\! x_\circ \wedge v = accept& :(x,v) \leftarrow \langle{\cal S}^{\cal A} , {\cal V}\rangle\bigl] 
		\switch{\;}{\\ &}
		\geq \frac{1}{(2q+1)^2} \Pr_H\bigr[x\!=\! x_\circ \wedge V^H_{FS}(x,\pi) : (x,\pi) \leftarrow {\cal A}^H \bigl] \, . 
		\end{align*} 
	\end{thm}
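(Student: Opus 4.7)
The plan is to derive Theorem~\ref{thm:FS} as essentially a direct corollary of Theorem~\ref{thm:main}, by viewing a Fiat-Shamir adversary through the lens of the generic query-extraction framework. The key observation is that the only random-oracle query whose value matters for the Fiat-Shamir transformation of a \sigp is the query at the point $(x,a)$, where $x$ is the (adversarially chosen) statement and $a$ is the first-round message; the hash value at that point plays the role of the challenge, which is exactly the value that the single-input measure-and-reprogram result lets us control.

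Concretely, from any Fiat-Shamir adversary $\mathcal{A}^H$ outputting $(x,\pi)$ with $\pi=(a,z)$, define $\mathcal{B}^H$ to be the algorithm that runs $\mathcal{A}^H$ and writes $((x,a),z)$ into the $(\regX,\regZ)$ registers, so that the pair $(x,a)$ now plays the role of the ``query point'' to be extracted. Let $\tilde V$ be the quantum predicate defined by $\tilde V((x,a),c,z) := V_\Sigma(x,a,c,z)$, where $V_\Sigma$ is the underlying \sigp verification predicate. By definition of the Fiat-Shamir transformation,
\begin{equation*}
V^H_{FS}(x,\pi) \;=\; \tilde V\bigl((x,a),\,H(x,a),\,z\bigr) .
\end{equation*}
Since $\mathcal{B}$ makes the same $q$ queries to $H$ as $\mathcal{A}$, it fits into the setting of Theorem~\ref{thm:main} with the query domain taken to be the space of (statement, first-message) pairs.

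Next, apply Theorem~\ref{thm:main} to $\mathcal{B}$ with predicate $\tilde V$ and target point $(x_\circ,a_\circ)$, obtaining a two-stage simulator $\mathcal{S}_0$. The Fiat-Shamir simulator $\mathcal{S}$ is then obtained by interfacing $\mathcal{S}_0$ with the \sigp verifier $\mathcal{V}$: run stage~1 of $\mathcal{S}_0$ to obtain $(x,a)$, forward these values to $\mathcal{V}$, receive a uniformly random challenge $\Theta\in\mathcal{Y}$ from $\mathcal{V}$, feed $\Theta$ to stage~2 of $\mathcal{S}_0$ to obtain $z$, send $z$ to $\mathcal{V}$, and output $(x,v)$ with $v$ the verifier's accept/reject decision. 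Theorem~\ref{thm:main} gives the desired inequality for each fixed $(x_\circ,a_\circ)$; summing both sides over all first messages $a_\circ$ (the events being mutually exclusive on both sides) yields exactly the statement of Theorem~\ref{thm:FS}.

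I do not expect any real obstacle beyond this bookkeeping. The one point that has to be checked is that the challenge distribution $\mathcal{S}_0$ expects in its second stage (uniform $\Theta\in\mathcal{Y}$) coincides with what $\mathcal{V}$ supplies, which holds by definition of the honest verifier. The benefit of invoking our improved Theorem~\ref{thm:main} rather than the analogous result of \cite{DFMS19} is precisely that the absence of an additive error term eliminates the negligible additive slack that would otherwise appear on the right-hand side of the statement.
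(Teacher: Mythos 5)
Your proposal is correct and follows essentially the same route the paper intends: the paper states Theorem~\ref{thm:FS} as a direct rephrasing of Theorem~8 of \cite{DFMS19}, whose proof is exactly your argument\,---\,treat the pair $(x,a)$ as the point to be extracted, apply the single-input measure-and-reprogram result (here Theorem~\ref{thm:main}) with the \sigp verification as the predicate, interface the second stage with the honest verifier's uniform challenge, and sum the per-point bound over all first messages $a_\circ$ using mutual exclusivity. The same bookkeeping reappears in the paper's own proof of Corollary~\ref{cor:mFS} for the multi-round case, so no further checking is needed.
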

	%

\section{Multi-input reprogrammability}\label{subsec:tech-n}
In this section, we extend our (improved) results on adaptively reprogramming the quantum random oracle at {\em one} point $x \in {\cal X}$ to {\em multiple} points $x_1,\ldots,x_n \in {\cal X}$. This in turn will allow us to extend the results on the security of the Fiat-Shamir transformation to {\em multi-round} protocols. We point out again that the improvement of  Lemma \ref{lem:mainresult} over Lemma 1 in ~\cite{DFMS19} plays a crucial role here, in that it circumvents the trouble with the negligible error term that occurs when trying to extend the result from~\cite{DFMS19} to the setting considered here. 

The starting point is the following generalized version of the problem considered in Section~\ref{secmainresult}. We assume an oracle quantum algorithm ${\cal A}^H$ that makes $q$ queries to a random oracle $H: {\cal X}\rightarrow {\cal Y}$ and then produces an output of the form $(x_1,\ldots,x_n,z)$, where $z$ may be quantum, such that a certain (quantum) predicate $V(x_1, H(x_1),\ldots,x_n,H(x_n),z)$ is satisfied with some probability. 
The goal then is to turn such an ${\cal A}^H$ into a multi-stage quantum algorithm ${\cal S}$ (the {\em simulator}) that, stage by stage, outputs the $x_i$'s and takes corresponding $\Theta_i$'s as input, and eventually outputs a (possibly quantum) $z$ with the property that $V(x_1, \Theta_1,\ldots,x_n,\Theta_n,z)$ is satisfied with similar probability.

\subsection{The general case}

Naively, one might hope for an ${\cal S}$ that outputs $x_1$ in the first stage (obtained by measuring one of the queries of ${\cal A}^H$), and then on input $\Theta_1$ proceeds by outputting $x_2$ in the second stage (obtained by measuring one of the subsequent queries of ${\cal A}^H$), etc. However, since ${\cal A}^H$ may query the hashes of $x_1,\ldots,x_n$ in an arbitrary order, we cannot hope for this to work. 
Therefore, we have to allow $\cal S$ to produce $x_1,\ldots,x_n$ in an arbitrary order as well.%
\footnote{Looking ahead, in Section~\ref{sec:EnforceOrder} we will force ${\cal A}^H$ to query, and thus $\cal S$ to extract, $x_1,\ldots,x_n$ in the {\em right} order by requiring $x_2$ to contain $H(x_1)$ as a substring, $x_3$ to contain $H(x_2)$ as a substring, etc. This will be important for the the multi-round Fiat-Shamir application.  }
Formally, we consider $\cal S$ with the following syntactic behavior: in the first stage it outputs a permutation $\pi$ together with $x_{\pi(1)}$ and takes as input $\Theta_{\pi(1)}$, and then for every subsequent stage $1 < i \leq n$ it outputs $x_{\pi(i)}$ and takes as input $\Theta_{\pi(i)}$; eventually, in the final stage (labeled by $n+1$) it outputs $z$. In line with earlier notation, but taking this additional complication into account, we denote such an execution of $\cal S$ as $(\pi,\pi({\mathbf x}),z) \leftarrow \langle{\cal S}^{\cal A}, \pi({\mathbf \Theta})\rangle$.

A final issue is that if $x_i = x_j$ then $H(x_i) = H(x_j)$ as well, whereas $\Theta_i$ and $\Theta_j$ may well be different. Thus, we can only expect $\cal S$ to work well when $x_1,\ldots x_n$ has no duplicates.

For us to be able to mathematically reason about the simulator described above, we introduce some additional notation.
For the basic simulator from Lemma \ref{lem:mainresult} we write, using $r_1=(b_1,i_1)$, as
$$
\mathcal{S}^{H,{\cal A}}_{\Theta_1,x_1,r_1} := 
\mathcal{S}^{H,{\cal A},\Theta_1,x_1,r_1} := \big(\mathcal{A}_{i_1+b_1\rightarrow q}^{H*\Theta_1x_1}\big)\big(\mathcal{A}_{i_1\rightarrow i_1+b_1}^{H}\big)X_1\big(\mathcal{A}_{0\rightarrow i_1}^{H}\big) \, .
$$
This can be recursively extended by applying it to ${\cal A}^H$ now being $\mathcal{S}^{H,{\cal A}}_{\Theta_1,x_1,r_1}$ so as to obtain
$$
\mathcal{S}^{H,{\cal A}}_{\Theta_{1,2},x_{1,2},r_{1,2}} :=  \big(\mathcal{S}_{i_2+b_2\rightarrow q}^{H*\Theta_2x_2,{\cal A},\Theta_1,x_1,r_1}\big)\big(\mathcal{S}_{i_2\rightarrow i_2+b_2}^{H,{\cal A},\Theta_1,x_1,r_1}\big)X_2\big(\mathcal{S}_{0\rightarrow i_2}^{H,{\cal A},\Theta_1,x_1,r_1}\big).
$$
In general, we can consider the following operator, which simulates $\mathcal{A}$ and performs $n$ measurements: 
$$
\mathcal{S}_{{\mathbf \Theta},\mathbf x,{\mathbf r}}^{H,{\cal A}}  :=  \big(\mathcal{S}_{i_n+b_n\rightarrow q}^{H*\Theta_nx_n,{\cal A},{\overline{\mathbf \Theta}},{\overline{\mathbf x}},\overline{\mathbf r}}\big)\big(\mathcal{S}_{i_n\rightarrow i_n+b_n}^{H,{\cal A},{\overline{\mathbf \Theta}},{\overline{\mathbf x}},\overline{\mathbf r}}\big)X_n\big(\mathcal{S}_{0\rightarrow i_n}^{H,{\cal A},{\overline{\mathbf \Theta}},{\overline{\mathbf x}},\overline{\mathbf r}}\big).
$$
where, for arbitrary but fixed $n$ and ${\bf\Theta} = (\Theta_1,\ldots,\Theta_n) \in {\cal Y}^n$, the notation $\overline{\mathbf \Theta}$ is understood as $\overline{\mathbf \Theta} = (\Theta_1,\ldots,\Theta_{n-1}) \in {\cal Y}^{n-1}$, and correspondingly for $\mathbf x$ etc. 
Finally, when considering {\em fixed} ${\mathbf \Theta} \in {\cal Y}^n$ and ${\mathbf x} \in {\cal X}^n$, we write
$$
S_{\mathbf{r}}^H({\cal A}) := \mathcal{S}_{{\mathbf \Theta},\mathbf x,{\mathbf r}}^{H,{\cal A}} \, .
$$
At the core of our multi-round result will be the following technical lemma, which generalizes Lemma \ref{lem:mainresult}.
\begin{lem}\label{lem:mainresultmulti}
		Let $\cal A$ be a $q$-query oracle quantum algorithm. Then, for any function $H: {\cal X}\rightarrow {\cal Y}$, any ${\mathbf x} \in {\cal X}^n$ and ${\mathbf \Theta}^n\in {\cal Y}^n$, and any projection $\Pi_{{\mathbf x,\mathbf \Theta}}$, it holds that
	\begin{align*}
	&\frac{\big\|\big(\proj{\mathbf x}\otimes\Pi_{{\mathbf x, \mathbf \Theta}}\big){\cal A}^{H*{\mathbf \Theta \mathbf x}}\ket{\phi_0}\big\|_2^2}{(2q+1)^{2n}}\leq \E_{\mathbf r}\left[\big\|\big(\proj{\mathbf x}_A\otimes\Pi_{{\mathbf x, \mathbf\Theta}}\big)\mathcal{S}_{\mathbf r}^H({\cal A})\ket{\phi_0}\big\|_2^2\right].
	\end{align*}
\end{lem}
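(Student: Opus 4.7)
My plan is to prove the statement by induction on $n$. The base case $n=1$ is immediate: Lemma~\ref{lem:mainresult} applied with the arbitrary projection there set to $\Pi_{\mathbf{x},\mathbf{\Theta}}$. For the inductive step, assuming the statement for $n-1$ (for every base oracle and every projection), I would peel off the outermost reprogramming, at the $n$-th point, by an application of Lemma~\ref{lem:mainresult}, and then handle the remaining $n-1$ reprogrammings through the induction hypothesis.

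Concretely, for each fixed $\overline{\mathbf{r}}$ I would view $\mathcal{B}^{(\cdot)} := \mathcal{S}^{\cdot,\mathcal{A}}_{\overline{\mathbf{\Theta}},\overline{\mathbf{x}},\overline{\mathbf{r}}}$ as a $q$-query oracle quantum algorithm in the sense of Section~\ref{secmainresult} — the internal $X_j$-projections are admissible as elementary operations, and the baked-in reprogrammings $\overline{\mathbf{\Theta}}\overline{\mathbf{x}}$ of $\mathcal{A}$'s oracle calls can be implemented from plain controlled queries to $H$ using ancillas, without inflating the query count. Applying Lemma~\ref{lem:mainresult} to $\mathcal{B}$ at the single point $(x_n,\Theta_n)$, with the lemma's single-input projection chosen as $\ket{\overline{\mathbf{x}}}\!\bra{\overline{\mathbf{x}}}_{\regX_1\cdots\regX_{n-1}} \otimes \Pi_{\mathbf{x},\mathbf{\Theta}}$ so that the overall projection equals $G := \proj{\mathbf{x}}\otimes\Pi_{\mathbf{x},\mathbf{\Theta}}$, and unfolding the recursion shows that the resulting simulator $\mathcal{S}^{H,\mathcal{B}}_{\Theta_n,x_n,r_n}$ is exactly $\mathcal{S}^{H,\mathcal{A}}_{\mathbf{\Theta},\mathbf{x},\mathbf{r}}$. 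This yields
$$
\big\|G\, \mathcal{B}^{H*\Theta_n x_n}\ket{\phi_0}\big\|_2^2 \;\leq\; (2q+1)^2\, \E_{r_n}\!\left[\big\|G\, \mathcal{S}^{H,\mathcal{A}}_{\mathbf{\Theta},\mathbf{x},\mathbf{r}}\ket{\phi_0}\big\|_2^2\right].
$$
Since the entries of $\mathbf{x}$ are pairwise distinct, reprogramming commutes and $\mathcal{B}^{H*\Theta_n x_n} = \mathcal{S}^{H*\Theta_n x_n,\mathcal{A}}_{\overline{\mathbf{\Theta}},\overline{\mathbf{x}},\overline{\mathbf{r}}}$ with $(H*\Theta_n x_n)*\overline{\mathbf{\Theta}}\overline{\mathbf{x}}=H*\mathbf{\Theta}\mathbf{x}$. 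The induction hypothesis applied to $\mathcal{A}$ with base oracle $H*\Theta_n x_n$, reprogrammings $(\overline{\mathbf{\Theta}},\overline{\mathbf{x}})$, and projection $\ket{x_n}\!\bra{x_n}_{\regX_n}\otimes\Pi_{\mathbf{x},\mathbf{\Theta}}$ then upper-bounds $\|G\, \mathcal{A}^{H*\mathbf{\Theta}\mathbf{x}}\ket{\phi_0}\|_2^2$ by $(2q+1)^{2(n-1)}\,\E_{\overline{\mathbf{r}}}[\|G\, \mathcal{S}^{H*\Theta_n x_n,\mathcal{A}}_{\overline{\mathbf{\Theta}},\overline{\mathbf{x}},\overline{\mathbf{r}}}\ket{\phi_0}\|_2^2]$. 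Averaging the displayed Lemma~\ref{lem:mainresult} inequality over $\overline{\mathbf{r}}$ and chaining the two bounds produces the claimed $(2q+1)^{2n}$ factor.

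The main obstacle I anticipate is the careful bookkeeping at the inductive step: verifying that $\mathcal{B}^H$ genuinely fits the shape required by Lemma~\ref{lem:mainresult} (matching projection registers, query register, and query count across the outer and inner applications), and checking that the operator produced by the single-input lemma applied to $\mathcal{B}$ really is the recursively defined $\mathcal{S}^{H,\mathcal{A}}_{\mathbf{\Theta},\mathbf{x},\mathbf{r}}$ — once this identification is made, what remains is a clean two-step concatenation of inequalities. This is precisely where the error-free form of Lemma~\ref{lem:mainresult} is essential: nothing extra is accumulated when averaging over $\overline{\mathbf{r}}$ at the inner step, so the induction closes with a purely multiplicative $(2q+1)^2$ loss per round, circumventing the technical hurdle flagged in the introduction that blocks a naive iteration of the original statement of~\cite{DFMS19}.
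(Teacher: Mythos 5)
Your proof is correct and takes essentially the same route as the paper's: induction on $n$, with the inductive step obtained by composing the error-free single-input Lemma~\ref{lem:mainresult} with the induction hypothesis, exactly as the paper does. The only (immaterial) difference is the order of the two applications\,---\,the paper first peels off $x_n$ from $\mathcal{A}$ via Lemma~\ref{lem:mainresult} and then applies the induction hypothesis to the resulting one-measurement simulator, whereas you apply the induction hypothesis first (over the shifted base oracle $H*\Theta_n x_n$, using distinctness of the $x_j$ to commute the reprogrammings) and then Lemma~\ref{lem:mainresult} to the $(n\!-\!1)$-measurement simulator $\mathcal{B}$; both orderings produce the same nested operator $\mathcal{S}^{H,{\cal A}}_{{\mathbf\Theta},{\mathbf x},{\mathbf r}}$ and the same $(2q+1)^{2n}$ loss.
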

\begin{proof}
	The proof is by induction on $n$, where the base case is given by Lemma~\ref{lem:mainresult}.
	
	For the induction step we first apply the base case, substituting $x_n$ for $x_1$, $\Theta_n$ for $\Theta_1$, $r_n$ for $r_1$, $H\!*\!{\overline{\mathbf \Theta}\overline{\mathbf x}}$ for $H$, and $\hat{\Pi}_{x_n,\Theta_n}$ for $\Pi_{x_1,\Theta_1}$, where
	$$\hat{\Pi}_{x_n,\Theta_n} = \proj{x_1}\otimes\ldots\otimes\proj{x_{n\shortminus 1}}\otimes\Pi_{\mathbf x,\mathbf \Theta}
	$$
	to obtain
	\begin{align*}
	&\frac{\big\|\big(\proj{x_n}\otimes \hat{\Pi}_{x_n,\Theta_n}\big)\mathcal{A}^{\left(H*{\overline{\mathbf \Theta}\overline{\mathbf x}}\right)*\Theta_nx_n}\ket{\phi_0}\big\|_2^2}{(2q+1)^2}\\
	&\quad\ \ \leq  \E_{r_n}\left[\big\|\big(\proj{x_n}_A\otimes \hat{\Pi}_{x_n,\Theta_n}\big)\mathcal{S}_{r_n}^{H*{\overline{\mathbf \Theta}\overline{\mathbf x}}}({\cal A})\ket{\phi_0}\big\|_2^2\right]
	\end{align*}
	which we can write as
	\begin{align}\label{eqn:exprn}
	\frac{\big\|\big(\proj{\mathbf x}\otimes\Pi_{{\mathbf x,\mathbf \Theta}}\big)\mathcal{A}^{H*{\mathbf \Theta \mathbf x}}\ket{\phi_0}\big\|_2^2}{(2q+1)^{2n}}&\leq  \frac{\E_{r_n}\left[\big\|\big(\proj{\mathbf x}\otimes\Pi_{\mathbf x,\mathbf \Theta}\big)\mathcal{S}_{r_n}^{H*{\overline{\mathbf \Theta}\overline{\mathbf x}}}({\cal A})\ket{\phi_0}\big\|_2^2\right]}{(2q+1)^{2(n\shortminus 1)}}
	\end{align}
	dividing both sides by $(2q+1)^{2(n\shortminus 1)}$ and swapping registers appropriately (to make sure that the register which contains $x_n$ comes after the others).
	
	Now fix $r_n$. We define
	$$
	\hat{\Pi}_{{\overline{\mathbf x},\overline{\mathbf \Theta}}} := \proj{x_n}\otimes\Pi_{\mathbf x,\mathbf \Theta}.
	$$
	and apply the induction hypothesis for $n\!-\!1$, substituting ${\cal S}_{r_n}^{H*\overline{\mathbf \Theta} \overline{\mathbf x}}({\cal A})$ for ${\cal A}^{H*{\overline{\mathbf \Theta} \overline{\mathbf x}}}$, and $\hat{\Pi}_{{\overline{\mathbf x}},\overline{\mathbf \Theta}}$ for $\Pi_{{\overline{\mathbf x}},\overline{\mathbf \Theta}}$, in order to derive
	\begin{align*}
	\frac{\big\|\big(\proj{\mathbf x}\otimes\Pi_{\mathbf x,\mathbf \Theta}\big)\mathcal{S}_{r_n}^{H*{\overline{\mathbf \Theta}\overline{\mathbf x}}}({\cal A})\ket{\phi_0}\big\|_2^2}{(2q+1)^{2(n\shortminus 1)}} &= \frac{\big\|\big(\proj{\overline{\mathbf{x}}}\otimes\hat{\Pi}_{{\overline{\mathbf x}},\overline{\mathbf \Theta}}\big)\mathcal{S}_{r_n}^{H*{\overline{\mathbf \Theta}\overline{\mathbf x}}}({\cal A})\ket{\phi_0}\big\|_2^2}{(2q+1)^{2(n\shortminus 1)}} \\
	&\leq \E_{\overline{\mathbf r}}\left[\big\|\big(\proj{\overline{\mathbf{x}}}\otimes\hat{\Pi}_{{\overline{\mathbf x}},\overline{\mathbf \Theta}}\big)\mathcal{S}_{\overline{\mathbf r}}^{H}({\cal S}_{r_n}({\cal A}))\ket{\phi_0}\big\|_2^2\right]  \\
	&=\E_{\overline{\mathbf{r}}}\left[\big\|\big(\proj{\mathbf x}\otimes\Pi_{{\mathbf x,\mathbf \Theta}}\big)\mathcal{S}_{\mathbf r}^{H}({\cal A})\ket{\phi_0}\big\|_2^2\right].
	\end{align*}
	Since this inequality holds for any fixed $r_n$, it also holds in expectation over $r_n$. Substituting it in \mbox{Equation \ref{eqn:exprn}}, we retrieve the statement of the lemma. 
\qed\end{proof}

\begin{rem}\label{rem:disjointslots}
In case of ${\bf x} = (x_1,\ldots,x_n) \in {\cal X}^n$ {\em without duplicate entries}, it follows from the resulting mutual orthogonality of the projections $X_j$ and the definition of $\mathcal{S}_{\mathbf r}^H({\cal A})$ that the following holds. The term in the expectation $\E_{\mathbf r}$ in the inequality of Lemma~\ref{lem:mainresultmulti} vanishes for any ${\bf r} = ({\bf i},{\bf b})$ for which there exist two distinct coordinates $j \neq k$ with $i_j = i_k$. As such, we may well understand this expectation to be over ${\bf r} = ({\bf i},{\bf b})$ for which $i_j \neq i_k$ whenever $j \neq k$; this only increases the expectation.%
\footnote{One might try to exploit this actual improvement in the bound; however, for typical choices of parameters, with $n$ a small constant and $q$ large, this is insignificant. }
In other words, we may assume that random {\em distinct} queries are measured in order to extract $x_1,\ldots,x_n$. 
\end{rem}

\begin{thm}[Measure-and-reprogram, multiple inputs]\label{thm:multiplemar}
	Let $n$ be a positive integer, and let ${\cal X},{\cal Y}$ be finite non-empty sets. 
	There exists a black-box polynomial-time $(n\!+\!1)$-stage quantum algorithm $\cal S$ with the syntax as outlined at the start of this section, satisfying the following property. 
	Let $\cal A$ be an arbitrary oracle quantum algorithm that makes $q$ queries to a uniformly random $H: {\cal X}\rightarrow {\cal Y}$ and that outputs a tuple ${\mathbf x} \in {\cal X}^n$ and a (possibly quantum) output~$z$. Then,  
	for any $\mathbf x^\circ\in X^n$ \emph{without duplicate entries} and for any predicate	$V$:
	\begin{align*}
	\Pr_{{{\mathbf \Theta}}}\bigr[{\mathbf x}\!=\! \mathbf x^\circ& \wedge V({\mathbf x},{{\mathbf \Theta}},z) : (\pi,\pi({\mathbf x}),z) \leftarrow \langle{\cal S}^{\cal A} , \pi({\mathbf \Theta})\rangle\bigl] 
	\\ &
	\geq \frac{1}{(2q+1)^{2n}} \Pr_H\bigl[{\mathbf x}\!=\! \mathbf x^\circ \wedge V({\mathbf x},H(\mathbf{x}),z) : ({\mathbf x},z) \leftarrow {\cal A}^{H} \bigr] \, .
	\end{align*}%
\end{thm}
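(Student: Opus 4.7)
The plan is to derive the theorem from Lemma~\ref{lem:mainresultmulti} in three steps: realise the formal operator $\mathcal{S}_{\mathbf r}^H(\mathcal{A})$ as an interactive multi-stage simulator; average the pointwise bound of the lemma over uniformly random $H$ and $\mathbf{\Theta}$; and exploit the no-duplicate hypothesis on ${\mathbf x}^\circ$ to turn the left-hand side (involving the reprogrammed oracle) into a probability under a uniform oracle.

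First I would unpack the recursive definition of $\mathcal{S}_{\mathbf r}^H(\mathcal{A})$ to verify that, for any fixed $\mathbf\Theta, \mathbf x$ and $\mathbf r=((i_1,b_1),\ldots,(i_n,b_n))$, it runs $\mathcal{A}$ in time order while inserting the projections $X_j$ at the times $i_j$ and switching the effective oracle from $H$ to $H*\Theta_j x_j$ at time $i_j+b_j$. The interactive simulator $\mathcal{S}$ then samples $\mathbf r$ uniformly from tuples with pairwise distinct $i$-coordinates (justified by Remark~\ref{rem:disjointslots}), outputs the permutation $\pi$ that sorts the $i_j$'s in increasing order, and runs $\mathcal{A}$ on its initial state with the oracle simulated by a $2q$-wise independent function; at the time slot $i_{\pi(j)}$ it measures the query register to obtain $x_{\pi(j)}$, outputs it, receives $\Theta_{\pi(j)}$, and reprograms the simulated oracle at $x_{\pi(j)}$ accordingly (the bit $b_{\pi(j)}$ determining whether the reprogramming takes effect before or after the query at time $i_{\pi(j)}+1$); after the last extraction, $\mathcal{A}$ runs to completion and its $\regZ$-register content is returned. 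Since the total number of queries to the underlying simulated function is still $q$ (reprogrammed queries being implementable by one query plus an ancilla, as noted in Section~\ref{secmainresult}), $2q$-wise independence is perfectly indistinguishable from a uniformly random $H$ for the $q$-query view, so $\mathcal{S}$ can indeed be taken to emulate the operator $\mathcal{S}^H_{\mathbf r}(\mathcal{A})$ for a truly random $H$.

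Next I would instantiate Lemma~\ref{lem:mainresultmulti} with the given $\mathbf x^\circ$ and with $\Pi_{\mathbf x^\circ,\mathbf\Theta}$ equal to the projector associated with the predicate $V(\mathbf x^\circ,\mathbf\Theta,\cdot)$, and average both sides over uniform $H$ and uniform $\mathbf\Theta$. The averaged right-hand side is exactly $\Pr_{\mathbf\Theta}[\mathbf x=\mathbf x^\circ\wedge V(\mathbf x,\mathbf\Theta,z):(\pi,\pi(\mathbf x),z)\leftarrow\langle\mathcal{S}^{\mathcal{A}},\pi(\mathbf\Theta)\rangle]$. Because $\mathbf x^\circ$ has no duplicate entries, the substitution $H':=H*\mathbf\Theta\mathbf x^\circ$ is a measure-preserving change of variables on $(H,\mathbf\Theta)$ with $H'(\mathbf x^\circ)=\mathbf\Theta$, hence
\begin{equation*}
\E_{H,\mathbf\Theta}\bigl\|(\proj{\mathbf x^\circ}\otimes\Pi_{\mathbf x^\circ,\mathbf\Theta})\mathcal{A}^{H*\mathbf\Theta\mathbf x^\circ}\ket{\phi_0}\bigr\|_2^2=\Pr_{H'}\bigl[\mathbf x=\mathbf x^\circ\wedge V(\mathbf x,H'(\mathbf x),z):(\mathbf x,z)\leftarrow\mathcal{A}^{H'}\bigr],
\end{equation*}
which, divided by the $(2q+1)^{2n}$ loss from the lemma, yields the claimed inequality.

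The main obstacle is the first step, namely verifying that the recursive formal operator $\mathcal{S}^{H,\mathcal{A}}_{\mathbf\Theta,\mathbf x,\mathbf r}$ truly corresponds to the natural interactive procedure sketched above, keeping careful track of the interleaved projections at the times $i_j$, the reprogramming switches at times $i_j+b_j$, and the order of extraction imposed by $\pi$. Once this identification is made and Remark~\ref{rem:disjointslots} is invoked to restrict to distinct $i_j$'s without loss, the rest — the distributional identity for $H*\mathbf\Theta\mathbf x^\circ$ and the $k$-wise indistinguishability — is routine, and polynomial-time efficiency of $\mathcal{S}$ follows from efficient evaluation and updating of the $2q$-wise independent function together with the $n$ measurement/reprogramming stages.
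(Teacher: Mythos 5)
Your proposal is correct and follows essentially the same route as the paper's proof: apply Lemma~\ref{lem:mainresultmulti} with the expectation restricted as in Remark~\ref{rem:disjointslots}, average over $H$ and $\mathbf\Theta$, use that $(H,\mathbf\Theta)\mapsto H*\mathbf\Theta\mathbf x^\circ$ is measure-preserving for duplicate-free $\mathbf x^\circ$, identify $\mathcal{S}^H_{\mathbf r}(\mathcal{A})$ with the interactive simulator whose permutation $\pi$ is read off from the ordering of the $i_j$'s, and invoke $2q$-wise independence for efficiency. The only caveat is that this argument (like the paper's own) delivers the loss $(2q+1)^{2n}$ from the lemma rather than the $(q+1)^{2n}$ appearing in the theorem statement, a constant-factor discrepancy in the paper itself and not a defect of your reasoning.
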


\begin{proof}
We consider the inequality of Lemma~\ref{lem:mainresultmulti} with the expectation over $\bf r$ understood as in Remark~\ref{rem:disjointslots}. 
Additionally taking the expectation over $H$ and ${\mathbf \Theta}$ on both sides, we obtain
	\begin{align*}
	&\E_{H,\mathbf{\Theta}}\left[\frac{\big\|\big(\proj{\mathbf x}\otimes\Pi_{{\mathbf x, \mathbf \Theta}}\big){\cal A}^{H*{\mathbf \Theta \mathbf x}}\ket{\phi_0}\big\|_2^2}{(2q+1)^{2n}}\right]\leq \E_{H,\mathbf{\Theta},\mathbf r}\left[\big\|\big(\proj{\mathbf x}\otimes\Pi_{{\mathbf x, \mathbf\Theta}}\big)\mathcal{S}_{\mathbf r}^H({\cal A})\ket{\phi_0}\big\|_2^2\right]
	\end{align*}
	and note that this is equivalent to 
	\begin{align*}
	&\E_{H}\left[\frac{\big\|\big(\proj{\mathbf x}\otimes\Pi_{{\mathbf x, H(\mathbf{x})}}\big){\cal A}^{H}\ket{\phi_0}\big\|_2^2}{(2q+1)^{2n}}\right]\leq \E_{H,\mathbf{\Theta},\mathbf r}\left[\big\|\big(\proj{\mathbf x}\otimes\Pi_{{\mathbf x, \mathbf\Theta}}\big)\mathcal{S}_{\mathbf r}^H({\cal A})\ket{\phi_0}\big\|_2^2\right].
	\end{align*}
	since all values $\Theta_j$ and $H(x_j)$ have the same distribution. The term $\mathcal{S}_{\mathbf r}^H({\cal A})\ket{\phi_0} = \mathcal{S}_{{\mathbf \Theta},\mathbf x,{\mathbf r}}^{H,{\cal A}}\ket{\phi_0}$ corresponds to the output of the simulator that uses oracle access to $H$ to run ${\cal A}$ on an initial state $\ket{\phi_0}$, while measuring queries $i_j$ (finding $x_j$ as the outcome) and reprogramming the oracle at $x_j$ to $\Theta_j$ from the $(i_j+b_j)$-th query onwards, with $(i_j,b_j)=r_j$. 
	
	Next, we note that the value of the right hand side does not change \cite{Zhandry2012a} when instead of giving ${\cal S}$ oracle access to $H$, we let it choose a random instance from a family of $2q$-wise\footnote{It is easy to see that the result of \cite{Zhandry2012a} also holds for controlled-query algorithms. Alternatively, the $q$ controlled queries can be simulated using $q+1$ plain queries, and a $2(q+1)$-wise independent function can be used.} independent hash functions to simulate ${\cal A}$ on.
	  The choice of ${\mathbf r}$ uniquely determines the permutation $\pi$ with the property \smash{$i_{\pi(1)} < \cdots < i_{\pi(n)}$}; by definition of $\mathcal{S}_{{\mathbf \Theta},\mathbf x,{\mathbf r}}^{H,{\cal A}}$, the values ${\mathbf x} = (x_1,\ldots, x_n)$ are then extracted from the adversary's queries in the order \mbox{$\pi(\mathbf x) = (x_{\pi(1)},\ldots, x_{\pi(n)})$}. 
	 Since ${\cal S}$ chooses this $\mathbf{r}$ itself, we can assume that it includes $\pi$ in its output. Likewise, the simulator takes as input to every stage\,---\,from the second to the \mbox{$(n\!+\!1)$-st}\,---\,a fresh random value, in the order given by $\pi(\mathbf{\Theta})$. However, by definition of $\Pi_{{\mathbf x, \mathbf \Theta}}$ the final output of the simulator satisfies the predicate $V$ with respect to the given order (without $\pi$), i.e. such that $V(\mathbf x,{\mathbf \Theta},z) = 1$, as is the claim of the theorem. 
\qed\end{proof}

\subsection{The time-ordered case}\label{sec:EnforceOrder}

In some applications, like the multi-round version of the Fiat-Shamir transformation, we need that the simulator extracts the messages in the right order. This can be achieved by replacing the hash {\em list} $H({\bf x}) = \big(H(x_1),\ldots,H(x_n)\big)$, consisting of individual hashes, by a hash {\em chain}, where subsequent hashes depend on previous hashes. Intuitively, this enforces $\cal A$ to query the oracle in the given order. 

%

Formally, considering a function $H: ({\cal X}_0 \cup {\cal Y}) \times {\cal X}\rightarrow {\cal Y}$ and given a tuple ${\mathbf x} = (x_0,x_1,\ldots,x_n)$ in ${\cal X}_0 \times {\cal X}^n$, we define the {\em hash chain} $\mathbf{h}^{H,\mathbf{x}} = \big(h_1^{H,\mathbf{x}},\ldots,h_n^{H,\mathbf{x}}\big)$ given by 
$$
h_1^{H,\mathbf{x}}=H(x_0,x_1) 
\qquad\text{and}\qquad 
h_i^{H,\mathbf{x}} := H\big(h_{i-1}^{H,\mathbf{x}},x_i\big) 
$$
for $2\leq i\leq n$. 

\begin{thm}[Measure-and-reprogram, enforced extraction order]\label{thm:enforced}
	Let $n$ be a positive integer, and let ${\cal X}_0,{\cal X}$ and ${\cal Y}$ be finite non-empty sets. 
	There exists a black-box polynomial-time $(n\!+\!1)$-stage quantum algorithm ${\cal S}$, satisfying the following property.
	Let $\cal A$ be an arbitrary oracle quantum algorithm that makes $q$ queries to a uniformly random $H: ({\cal X}_0\cup {\cal Y}) \times {\cal X}\rightarrow {\cal Y}$ and that outputs a tuple ${\mathbf x} = (x_0,x_1,\ldots,x_n) \in \left({\cal X}_0\times{\cal X}^n\right)$ and a (possibly quantum) output~$z$.
	Then, for any $\mathbf{x}^\circ\in ({\cal X}_0\times {\cal X}^n)$ without duplicate entries and for any predicate $V$:
	\begin{align*}
	&\Pr_{{{\mathbf \Theta}}}\bigr[\mathbf{x}\!=\! \mathbf{x}^\circ \wedge V({\mathbf x},{{\mathbf \Theta}},z) : ({\mathbf x},z) \leftarrow \langle{{\cal S}^A} , {\mathbf	 \Theta}\rangle\bigl] 
	\\ &
	\geq \frac{n!}{(2q+n+1)^{2n}} \Pr_H\bigl[\mathbf{x}\!=\! \mathbf{x}^\circ\wedge V({\mathbf x},\mathbf{h}^{H,\mathbf{x}},z) : ({\mathbf x},z) \leftarrow {\cal A}^{H} \bigr]-\epsilon_{\mathbf x^\circ}\, .
	\end{align*}%
	where $\epsilon_{\mathbf x^\circ}$ is equal to $\frac{n!}{|{\cal Y}|}$ when summed over all $\mathbf{x^\circ}$.
\end{thm}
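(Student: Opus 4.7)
The approach is to reduce to the multi-input measure-and-reprogram result (Lemma~\ref{lem:mainresultmulti}~/~Theorem~\ref{thm:multiplemar}), applied with the $n$ distinguished inputs being the chain queries $q_j := (h_{j-1}^{H,\mathbf{x}}, x_j)$, with $h_0^{H,\mathbf{x}} := x_0$, but with the simulator constrained to extract these in the temporal order of their occurrence in ${\cal A}$'s execution. The key observation is that the chain structure essentially forces any successful adversary to query $H$ at the chain points in that natural order, since each $h_j$ is determined via $H$ by $h_{j-1}$; a wrong-order extraction can match the chain only by the adversary anticipating a uniformly random value in $\cal Y$, an event of probability at most $1/|{\cal Y}|$ per chain link.

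Concretely, I would build ${\cal S}$ as a variant of the simulator from Lemma~\ref{lem:mainresultmulti}: pick uniformly random ordered positions $0 \leq i_1 < \ldots < i_n \leq q$ and bits $b_1,\ldots,b_n \in \{0,1\}$, run ${\cal A}$ on a simulated oracle, and at each position $i_j$ measure the query register to obtain a pair $(c_j, x_j^\ast)$. The simulator then outputs $x_j^\ast$, receives $\Theta_j$ from the environment, and reprograms $H$ at $(c_j, x_j^\ast)$ to $\Theta_j$; the bit $b_j$ controls (as in Lemma~\ref{lem:mainresult}) whether the reprogramming takes effect at the current or the following query. When the chain is queried in the natural order one has $c_j = \Theta_{j-1}$ (with $\Theta_0 := x_0$), so the reprogrammed hash chain evaluates to $\mathbf{h}^{H\ast,\mathbf{x}} = \mathbf{\Theta}$ and the predicate $V(\mathbf{x},\mathbf{h}^{H\ast,\mathbf{x}},z)$ coincides with $V(\mathbf{x},\mathbf{\Theta},z)$, which is what the theorem asks for.

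Applying Lemma~\ref{lem:mainresultmulti} and restricting (by Remark~\ref{rem:disjointslots}) to distinct positions, one lower-bounds $\E_\mathbf{r}[\|(\ket{\mathbf{x}}\bra{\mathbf{x}}\otimes\Pi_{\mathbf{x},\mathbf{\Theta}}){\cal S}_\mathbf{r}^H({\cal A})\ket{\phi_0}\|_2^2]$ by the real success probability of ${\cal A}$ on the reprogrammed oracle. The ordered-position $\mathbf{r}$'s form only a $1/n!$ fraction of the distinct-position $\mathbf{r}$'s, but essentially \emph{only} they can contribute to a valid extraction: for any non-identity permutation, the required ``early'' query would need to contain a value $h_j^{H,\mathbf{x}}$ that is still uniformly distributed in $\cal Y$ from the adversary's viewpoint, a matching event of probability at most $1/|{\cal Y}|$. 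Hence the ordered-$\mathbf{r}$ expectation alone captures the full lower bound up to this additive error, yielding the factor $n!$ in the numerator of the theorem; a tighter iterative accounting of the triangle-and-Jensen step of Lemma~\ref{lem:mainresult} along the chain (each step now summing over only the positions after the previous measurement, rather than all $q+1$) further sharpens the denominator from the naive $(2q+1)^{2n}$ to $(q+n+1)^{2n}$.

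The principal obstacle is quantifying the out-of-order error cleanly so as to obtain the stated bound $\sum_{\mathbf x^\circ} \epsilon_{\mathbf x^\circ} \leq n!/|{\cal Y}|$. For each non-identity permutation $\pi$ of $\{1,\ldots,n\}$ one must bound the probability that ${\cal A}$ makes a query whose first coordinate matches a chain value $h_j^{H,\mathbf{x}}$ before the corresponding reprogramming has fixed it; since that value is uniform in $\cal Y$ over the choice of $H$, a straightforward union bound over the $n!$ permutations (and over $\mathbf{x}^\circ$) produces the claimed additive term, completing the proof.
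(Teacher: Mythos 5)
Your overall strategy coincides with the paper's: apply the multi-input measure-and-reprogram theorem with the distinguished inputs taken to be the chain points $v_j=(h_{j-1}^{H,\mathbf x},x_j)$, observe that an out-of-order extraction can only satisfy the chain predicate if a value supplied to the simulator only at a later stage (namely $\Theta_{j-1}$) is matched in advance, an event of probability $1/|{\cal Y}|$, and then restrict the simulator to the identity ordering, trading the $n!$ gain against the additive error $\epsilon_{\mathbf x^\circ}$ with $\sum_{\mathbf x^\circ}\epsilon_{\mathbf x^\circ}\le n!/|{\cal Y}|$. This is exactly the paper's decomposition over permutations, so the core of your argument is sound.

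There is, however, one concrete gap: the derivation of the denominator $(q+n+1)^{2n}$. Theorem~\ref{thm:multiplemar} reprograms points that appear in the \emph{output} of the algorithm, whereas $\cal A$ outputs only $\mathbf x=(x_0,\ldots,x_n)$ and not the chain inputs $v_j$ (whose first coordinates are intermediate hash values). The paper therefore first extends $\cal A$ to an algorithm ${\cal A}_+$ that recomputes and outputs the $v_j$'s at the cost of $n$ additional oracle queries, applies Theorem~\ref{thm:multiplemar} to this $(q+n)$-query algorithm\,---\,which is precisely where $(q+n+1)^{2n}$ comes from\,---\,and finally sums over the a priori unknown coordinates $h_i^\circ$ so that a fixed, duplicate-free target $\mathbf v^\circ$ can be specified as the theorem requires. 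Your proposed alternative mechanism for the constant (a ``tighter iterative accounting'' of the triangle-and-Jensen step, summing only over positions after the previous measurement) is not developed and does not obviously produce this value; note that $(q+n+1)^{2n}$ is \emph{larger} than the $(q+1)^{2n}$ of Theorem~\ref{thm:multiplemar}, i.e., the constant reflects a per-query \emph{penalty} from the $n$ extra queries rather than a refinement of the Jensen step. Moreover, without the extension to ${\cal A}_+$ you cannot invoke the multi-input theorem as a black box at all, so this step needs to be supplied before the rest of your argument goes through.
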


\begin{rem}\label{rem:decrease-additive-error}
	The additive error term $n!/|{\cal Y}|$ stems from the fact that the extraction in the right order fails if $\cal A$ succeeds in guessing one (or more) of the hashes in the hash chain. The claimed term can be improved to $(n-1)^2/|{\cal Y}| + n!/|{\cal Y}|^2$ by doing a more fine-grained analysis, distinguishing between permutations $\pi \neq \mathrm{id}$ that bring 2 elements ``out of order'' or more. In any case, it can be made arbitrary small by extending the range $\cal Y$ of $H$ for computing the hash chain. 
	
\end{rem}

\begin{proof}
	First, we note that $V({\mathbf x},\mathbf{h}^{H,\mathbf{x}},z)= V'(\mathbf{v},H(\mathbf{v}),z)$ for ${\bf v} = (v_1,\ldots,v_n)$ given by $v_1 = (x_0,x_1)$ and $v_i = \big(h_{i-1}^{H,\mathbf{x}},x_i\big) = \big(H(v_{i-1}),x_i\big)$ for $i\geq 2$, and  $V'(\mathbf{v},\mathbf{h},z) := \big[\,V(\mathbf{x},\mathbf{h},z) \,\wedge\, h'_{i} \!=\! h_{i-1} \forall i\geq 2 \,\big]$ for any $\bf v$ of the form $v_1 = (x_0,x_1)$ and $v_i = \big(h'_i,x_i\big)$ for $i\geq 2$. Next, at the cost of $n$ additional queries, we can extend $\cal A$ to an algorithm ${\cal A}_+$ that actually outputs $(\mathbf v,z)$, since ${\cal A}_+$ can easily obtain the $H(v_i)$'s by making $n$ queries to $H$. These observations together give
	\begin{align*}
	\Pr_H\bigl[\mathbf{x}\!=\! \mathbf x^\circ& \wedge V({\mathbf x},\mathbf{h}^{H,\mathbf{x}},z) : ({\mathbf x},z) \leftarrow {\cal A}^{H} \bigr]
	=
	\Pr_H\bigl[\mathbf x\!=\! \mathbf x^\circ \wedge V'(\mathbf{v},H(\mathbf{v}),z) : ({\mathbf v},z) \leftarrow {\cal A}_+^{H} \bigr] \, .
	\end{align*}
	
	Let $\mathbf v^\circ = (v_1^\circ,\ldots,v_n^\circ)$ with $v_i^\circ := (h^\circ_i,x^\circ_i)$, where $h_1^\circ = x^\circ_0$ and $h_i^\circ \in {\cal Y}$ is arbitrary but fixed for $i \geq 2$. Let $\mathbf\Theta$ be uniformly random in ${\cal Y}^n$. An application of Theorem \ref{thm:multiplemar} yields a simulator $\hat{\cal S}$ with
	\begin{align*}
	\Pr_{{{\mathbf \Theta}}}\bigr[{\mathbf v}\!=\! \mathbf v^\circ& \wedge V'({\mathbf v},{{\mathbf \Theta}},z) : (\pi,\pi({\mathbf v}),z) \leftarrow \langle{\hat{\cal S}}^{\cal A_+} , \pi({\mathbf \Theta})\rangle\bigl] 
	\\ &
	\geq \frac{1}{(q+n+1)^{2n}} \Pr_H\bigl[{\mathbf v}\!=\! \mathbf v^\circ \wedge V'({\mathbf v},H(\mathbf{v}),z) : ({\mathbf v},z) \leftarrow {\cal A}_+^{H} \bigr] \, .
	\end{align*}
	Summing both sides of the inequality over $h_i^\circ$ for $i\geq 2$ yields
	\begin{align}
	\begin{split}
	\Pr_{{{\mathbf \Theta}}}\bigr[{\mathbf x}\!=\! \mathbf x^\circ& \wedge V'({\mathbf v},{{\mathbf \Theta}},z) : (\pi,\pi({\mathbf v}),z) \leftarrow \langle{\hat{\cal S}}^{\cal A_+} , \pi({\mathbf \Theta})\rangle\bigl] 
	\\ &
	\geq \frac{1}{(q+n+1)^{2n}} \Pr_H\bigl[{\mathbf x}\!=\! \mathbf x^\circ \wedge V'({\mathbf v},H(\mathbf{v}),z) : ({\mathbf v},z) \leftarrow {\cal A}_+^{H} \bigr]
		 \\ &
		= \frac{1}{(q+n+1)^{2n}} \Pr_H\bigl[\mathbf{x}\!=\! \mathbf x^\circ \wedge V({\mathbf x},\mathbf{h}^{H,\mathbf{x}},z) : ({\mathbf x},z) \leftarrow {\cal A}^{H} \bigr] \, .\end{split}\label{eq:plainapplication}
	\end{align}
	Recalling its construction, the simulator ${\hat{\cal S}}^{\cal A_+}$ begins by sampling a uniformly random permutation $\pi$, so we can write
	\begin{align}
	\begin{split}
	\Pr_{{{\mathbf \Theta}}}\bigr[&{\mathbf x}\!=\! \mathbf x^\circ \wedge V'({\mathbf v},{{\mathbf \Theta}},z) : (\pi,\pi({\mathbf v}),z) \leftarrow \langle{\hat{\cal S}}^{\cal A_+} , \pi({\mathbf \Theta})\rangle\bigl]  
	\\ &
	=\frac 1 {n!}\sum_{\sigma\in S_n}\Pr_{{\mathbf \Theta}}\bigr[{\mathbf x}\!=\! \mathbf x^\circ \wedge V'({\mathbf v},{{\mathbf \Theta}},z): (\pi,\pi(\mathbf v),z) \leftarrow \langle{\hat{\cal S}}^{\cal A_+} , {\pi(\mathbf\Theta)}\rangle\big|\pi=\sigma\bigl]  \, .
	\end{split}
	\label{eq:permsum}
	\end{align}
	By definition, the predicate $V'({\mathbf v},\mathbf{\Theta},z) $ (with $\mathbf{v}$ of the form as explained above) is false whenever there exists an $i\geq 2$ such that $h_i\neq \Theta_{i-1}$. Now suppose that $\pi\neq\mathrm{id}$, then there must be some $j$ such that $\pi(j)<\pi(j-1)$. This implies that the first $\pi(j)$ stages of $\hat{\cal S}^{\cal A_+}$ which together (in the $\pi(j)$-th stage) produce $v_j=(h_j,x_j)$ are independent of $\Theta_{j-1}$, since $\Theta_{j-1}$ is given as input only at the {\em later} stage $\pi(j-1)$. We thus have the following, taking it as understood, here and in the sequel, that the random variables $\pi,\mathbf{v},\mathbf{\Theta}$ and $z$ are as in~(\ref{eq:permsum}). 
	\begin{align*}
	\Pr\bigl[{\mathbf x}\!=\! \mathbf x^\circ \wedge V'({\mathbf v},{{\mathbf \Theta}},z)\big|\pi\neq\mathrm{id}\bigr]
	&\le \Pr\bigl[{\mathbf x}\!=\! \mathbf x^\circ \wedge   h_j=\Theta_{j-1}|\pi\neq\mathrm{id}\bigr] 
	=\frac{\Pr\bigl[{\mathbf x}\!=\! \mathbf x^\circ|\pi\neq\mathrm{id}\bigr]}{|{\cal Y}|} \, .
	\end{align*}
	Using Equation \eqref{eq:permsum}, we can bound
	\begin{align*}
	\frac 1 {n!}\sum_{\sigma\in S_n}\Pr\bigr[{\mathbf x}\!=\! \mathbf x^\circ \wedge V'({\mathbf v},{{\mathbf \Theta}},z)\big|\pi\!=\!\sigma\bigl] 
	&\leq \frac 1 {n!}\Pr\bigr[{\mathbf x}\!=\! \mathbf x^\circ \wedge V'({\mathbf v},{{\mathbf \Theta}},z)\big|\pi\!=\!\mathrm{id}\bigl]
	+\frac{\Pr\bigl[{\mathbf x}\!=\! \mathbf x^\circ|\pi\!\neq\!\mathrm{id}\bigr]}{|{\cal Y}|} \, .
	\end{align*}
	We note that by definition of $V'$,
	\begin{align*}
	\Pr\bigr[{\mathbf x}\!=\! \mathbf x^\circ \wedge V({\mathbf x},{{\mathbf \Theta}},z)\big|\pi=\mathrm{id}\bigl] &\geq 
	\Pr\bigr[{\mathbf x}\!=\! \mathbf x^\circ \wedge V'({\mathbf v},{{\mathbf \Theta}},z)\big|\pi=\mathrm{id}\bigl]\ .
	\end{align*}
	Furthermore, we may define a new simulator ${\cal S}$ which takes oracle access to $\cal A$ and turns it into $\cal A_+$, and always chooses $\pi=\mathrm{id}$ instead of a random permutation. Where $\hat{\cal S}$ would output $(\mathbf v,z)$, ${\cal S}$ ignores the $\mathbf{h}$-part of $\mathbf{v}$ and simply outputs $(\mathbf x,z)$. We then have
	\begin{align*}
	\Pr_{{{\mathbf \Theta}}}\bigr[&\mathbf{x}\!=\! \mathbf{x}^\circ \wedge V({\mathbf x},{{\mathbf \Theta}},z) : ({\mathbf x},z) \leftarrow \langle{{\cal S}^A} , {\mathbf	 \Theta}\rangle\bigl]
	\\ &
	\geq \frac{n!}{(q+n+1)^{2n}} \Pr_H\bigl[\mathbf{x}\!=\! \mathbf x^\circ \wedge V({\mathbf x},\mathbf{h}^{H,\mathbf{x}},z) : ({\mathbf x},z) \leftarrow {\cal A}^{H} \bigr] -\epsilon_{\mathbf x^\circ}\, .
	\end{align*}
	with $\epsilon_{\mathbf x^\circ}$ given by $\epsilon_{\mathbf x^\circ}:= n!\cdot\Pr_{\mathbf \Theta}\bigl[{\mathbf x}= \mathbf x^\circ|\pi\neq\mathrm{id}\bigr]/|{\cal Y}|$. 
	\qed\end{proof}

\section{The multi-round Fiat-Shamir transformation}\label{sec:mFS}

A straightforward generalization of the Fiat-Shamir transformation can be applied to arbitrary (i.e., multi-round) public-coin interactive proof systems (PCIP). We show here security of this multi-round Fiat-Shamir transformation in the QROM. 

\subsection{Public coin interactive proofs and multi-round Fiat-Shamir}

We begin by defining PCIPs, mainly to fix notation, and the corresponding multi-round Fiat-Shamir transformation. 

	\begin{defi}[Public coin interactive proof system (PCIP)]
	A $(2n\!+\!1)$-round public coin interactive proof system (PCIP) $\mathsf{\Pi} = ({\cal P}, {\cal V})$ for a 
	language $\mathcal{L}$ is a $(2n\!+\!1)$-round two-party interactive protocol of the form, with $\cal C$ being a finite non-empty set, and $V$ a predicate:
	\begin{empheq}[box=\widefbox]{align*}
	&\underline{\text{Prover } {\cal P}(x)}&&&&\underline{\text{Verifier } {\cal V}(x)}\\
	&&&\overset{a_1}{\longrightarrow}&\\
	&&&\overset{c_1}{\longleftarrow}&&c_1\overset{\,\$}{\leftarrow} {\cal C} \\
	&&&\quad\vdots&&\\
	&&&\overset{a_n}{\longrightarrow}&\\
	&&&\overset{c_n}{\longleftarrow}&&c_n\overset{\,\$}{\leftarrow} {\cal C} \\
	&&&\overset{z}{\longrightarrow}&&\textup{Accept iff } V(x,a_1,c_1,...,a_n,c_n,z) = 1
	\end{empheq}\switch{\vspace{-1.5ex}}{}
%
\end{defi}

\begin{rem}
 If the language $\mathcal L$ is definied by means of an (efficiently verifiable) witness relation $R \subseteq {\cal X} \times {\cal W}$, then the prover typcially gets a witness $w$ for $x$ as an additional input. We then also say that $\mathsf \Pi$ is a PCIP {\em for the relation $R$}.
	In case of a $(2n\!+\!1)$-round PCIP  $\mathsf{\Pi}$ for a witness relation $R$ that is {\em hard on average}, meaning that there exists an instance generator $\Gen$ with the property that for $(w,x) \leftarrow \Gen$ it holds that $(w,x) \in R$, but given $x$ alone it is computationally hard to find $w$ with $(w,x) \in R$, $\mathsf{\Pi}$ is also called an {\em identification scheme}. 
\end{rem}


Just as in the ordinary Fiat-Shamir transformation, the interaction used to enforce the time order between the prover committing to the message $a_i$ and receiving the challenge $c_i$ can be replaced by means of a hash function. In addition, we can include the previous challenge (i.e. the previous hash value) in the hash determining the next challenge to enforce the ordering of the $n$ pairs $(a_i, c_i)$ according to increasing $i$. We thus obtain the following non-interactive proof system. 

\begin{defi}[Fiat-Shamir transformation for general PCIP (mFS)]\label{def:mFS}\ \\
	Given an $(2n\!+\!1)$-round PCIP $\mathsf{\Pi} = ({\cal P}, {\cal V})$ for a 
	language $\mathcal{L}
	$ 
	 and a hash function $H$ with appropriate domain, and range equal to $\cal C$, we define the non-interactive proof system $\mathsf{FS[\Pi]}= ({\cal P}^H_{FS}, {\cal V}^H_{FS})$  as follows. The prover $\cal P$ outputs
	\begin{align*}
		(x,a_1,...,a_n,z)&\leftarrow  {\cal P}^H_{FS}
	\end{align*}
	where $z$ and $a_i$ for $i=1,...,n$ are computed using $\cal P$, and the challenges are computed as
	\begin{align*}
		c_1&=H(0,x,a_1)\text{ and} \\
		c_i&=H(i-1,c_{i-1},a_i) \text{ for } i=2,...,n \, ,
	\end{align*}
	The verifier outputs `accept' iff  $V(x,a_1,c_1,...,a_n,c_n,z) = 1$ for $c_1=H(0,x,a_1)$ and  $c_i=H(i-1,c_{i-1},a_i)$, $i=2,...,n$, denoted by $V_{FS}(x,a_1,c_1,...,a_n,c_n,z) = 1$.
\end{defi}
\begin{rem}\label{rem:prefixes}
	The challenge number $i$ (minus 1) is included in the hash input to ensure that the challenges are generated using distinct inputs to $H$ with probability 1. This is to enable us to apply Theorem \ref{thm:enforced}, which only holds for duplicate-free lists of hash inputs. In fact, any additional strings can be included in the argument when computing $c_i$ using $H$, without influencing the security properties of the non-interactive proof system in a detrimental way. In the literature one sometimes sees that the entire previous transcript is hashed (in which case the counter number $i$ may then be omitted). 
\end{rem}

\subsection{General security of multi-round Fiat-Shamir in the QROM}

 When constructing a reduction for mFS, this reduction is participating as a prover in the underlying PCIP, and is hence only provided with random challenges one at a time. We thus need the special simulator from Theorem \ref{thm:enforced}, which always outputs the corresponding messages in the right order. The success of this simulator is based on the very essence of the Fiat-Shamir transformation, namely the fact that the intractability of the hash function takes the role of the interaction in enforcing a time order in the transcript of the PCIP.

The security of the multi-round Fiat-Shamir transformation follows as a simple Corollary of Theorem \ref{thm:enforced}.

\begin{cor}\label{cor:mFS} 
	There exists a black-box quantum polynomial-time $(n\!+\!1)$-stage quantum algorithm $\cal S$ such that for any adaptive adversary $\cal A$ against the multi-round Fiat-Shamir transformed version $\mathsf{FS[\Pi]}$ of a $(2n\!+\!1)$-round PCIP $\mathsf\Pi$, making $q$ queries to a uniformly random function $H$ with appropriate domain and range equal $\cal C$, and for any $x^\circ\in {\cal X}$:
	\begin{align*}
	\Pr\bigr[&x = x^\circ\wedge v = accept :(x,v) \leftarrow \langle{\cal S}^{\cal A} , {\cal V}\rangle\bigl] 
	\switch{\;}{\\ &}
	\geq \frac{n!}{(2q+n+1)^{2n}} \Pr_H\bigr[x = x^\circ \wedge V^H_{FS}(x,\pi) : (x,\pi) \leftarrow {\cal A}^H \bigl] -\epsilon_{x^\circ}\, .
	\end{align*}
	where the additive error term $\epsilon_{x^\circ}$ is equal to $\frac{n!}{|{\cal C}|}$ when summed over all $x^\circ$.
\end{cor}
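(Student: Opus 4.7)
The plan is to apply Theorem~\ref{thm:enforced} directly to the adversary $\cal A$, after observing that the challenges generated by the multi-round Fiat-Shamir transformation are precisely a hash chain of the form considered there. Concretely, Definition~\ref{def:mFS} sets $c_1 = H(0,x,a_1)$ and $c_i = H(i-1, c_{i-1}, a_i)$, which matches $h_1 = H(x_0, x_1)$ and $h_i = H(h_{i-1}, x_i)$ under the identification $x_0 := (0,x)$ and $x_i := (i-1, a_i)$. Crucially, the counter prefix $i-1$ guarantees that the tuple $(x_0, x_1, \ldots, x_n)$ has no duplicate entries, as required by Theorem~\ref{thm:enforced}; this is exactly the point of Remark~\ref{rem:prefixes}. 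Hence the FS adversary $\cal A$ can be rebranded, with no actual modification, as an adversary of the form considered in Theorem~\ref{thm:enforced}, with $V(\mathbf{x}, \mathbf{h}^{H,\mathbf{x}}, z) = V^H_{FS}(x, \pi)$.

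Applying Theorem~\ref{thm:enforced} yields a black-box $(n+1)$-stage simulator $\hat{\cal S}$ that outputs the $x_i$'s in the correct order and expects $\Theta_i$'s to be fed back between stages. The simulator $\cal S$ of the corollary is then built as a thin wrapper: $\cal S$ plays the prover in $\mathsf{\Pi}$ against the external verifier $\cal V$, internally running $\hat{\cal S}$. When $\hat{\cal S}$ emits $x_0$ and $x_1$, $\cal S$ extracts $x$ and $a_1$ and sends them to $\cal V$, receives a uniform challenge $c_1 \in \cal C$, and feeds $\Theta_1 := c_1$ to $\hat{\cal S}$; continuing this way through all $n$ rounds, $\cal S$ finally forwards the extracted $z$ as the last prover message. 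Since $\cal V$ samples its challenges uniformly and independently, the joint distribution of the $\Theta_i$'s fed into $\hat{\cal S}$ matches exactly the hypothesis of Theorem~\ref{thm:enforced}, so the probability that $\cal V$ accepts with statement $x^\circ$ is precisely the extraction probability bounded there, summed over all message tuples $(a_1, \ldots, a_n)$ while holding the statement fixed to $x^\circ$.

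To recover the statement of the corollary, which speaks only of the statement $x$ and not of the $a_i$'s, I would sum the inequality of Theorem~\ref{thm:enforced} over all $\mathbf{x}^\circ$ whose first coordinate equals $x^\circ$; the multiplicative factor passes through unchanged, and the additive errors sum to the claimed $\epsilon_{x^\circ}$ with $\sum_{x^\circ} \epsilon_{x^\circ} \le n!/|\cal C|$. The only piece requiring care is the query-count bookkeeping: the denominator $(2q+n+1)^{2n}$ in the corollary versus $(q+n+1)^{2n}$ in Theorem~\ref{thm:enforced} must absorb the overhead incurred in casting the FS adversary into the form required by the theorem, for instance simulating controlled queries or emulating the counter-prefixed inputs to $H$. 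I expect this to be the main, though essentially routine, technical step; the conceptual work is done entirely by the hash-chain identification and a single invocation of Theorem~\ref{thm:enforced}.
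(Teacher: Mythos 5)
Your proposal matches the paper's proof: the paper likewise sets $\mathbf{x}^\circ = (x^\circ,(0,a_1),\ldots,(n-1,a_n))$, notes that the round counters guarantee duplicate-freeness, applies Theorem~\ref{thm:enforced}, and sums over all choices of $a_1,\ldots,a_n$; the wrapper that feeds the verifier's challenges to the simulator stage by stage is exactly the intended (implicit) construction. Your closing remark about reconciling $(2q+n+1)^{2n}$ with $(q+n+1)^{2n}$ via query-count overhead is a fair reading of a bookkeeping step the paper itself leaves unexplained.
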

\begin{proof}
	We may simply set $\mathbf{x^\circ} = (x^\circ,(0,a_1),\ldots,(n-1,a_n))$ for arbitrary $a_1,\ldots,a_n$, apply Theorem \ref{thm:enforced} and then sum over all choices of $a_1,\ldots,a_n$ to obtain the claimed inequality. Note that the round indices ensure that every such $\mathbf{x^\circ}$ is duplicate free, satisfying the corresponding requirement of Theorem \ref{thm:enforced}.
\end{proof}

Note that the additive error terms reflect the fact that the random oracle only \emph{approximately} succeeds in enforcing the original time order in the transcript of the PCIP. However, it can be made arbitrarily small, as discussed below.
\begin{rem}\label{rem:extendC}
	There exist PCIPs with soundness error much smaller than $1/|\mathcal C|$. As an example, consider the sequential repetition of a \sigp with special soundness. Here, the soundness error is $1/|\mathcal C|^n$. In this case, the term proportional to $1/|\mathcal C|$ renders the bound from the above theorem trivial. Note however, that (i) this situation is extremely artificial, as there is absolutely no reason to repeat sequentially instead of in parallel, and (ii) the additive error term can be made arbitrarily small by considering a variant $\mathsf\Pi'$ of $\mathsf\Pi$ where the random challenges are enlarged with a certain number of bits that are ignored otherwise, see Remark \ref{rem:decrease-additive-error}. 
	
	In fact, we suspect that the observation from (i) is true in a much broader sense: if a PCIP still has negligible soundness error when allowing the adversary to learn one of the challenges $c_i$ in advance of sending the corresponding commitment-type message $a_i$, it seems like the number of rounds can be reduced and the loss in soundness error can be won back by parallel repetition.
\end{rem}

As for the case of the Fiat-Shamir transformation for \sigps, the general reduction implies that security properties that protect against dishonest provers carry over from the interactive to the non-interactive proof system. For a definition of the properties considered in the following theorem, see, e.g. \cite{DFMS19}. The quantum proof-of-knowledge-property was intoduced in \cite{Unruh2012}.
\begin{cor}[Preservation of Soundness/PoK]\label{cor:PresSoundPoK}
	Let $\mathsf{\Pi}$ be a constant-round PCIP  that has (statistical/computational) soundness, and/or the (statistical/computational) quantum proof-of-knowledge-property, respectively. 
Then, in the QROM, $\mathsf{FS[\Pi]}$ has (statistical/computational) soundness, and/or the (statistical/computational) quantum proof-of-knowledge-property, too.
\end{cor}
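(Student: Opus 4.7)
The plan is to invoke Corollary~\ref{cor:mFS} to convert any Fiat-Shamir adversary $\mathcal{A}$ against $\mathsf{FS[\Pi]}$ into an interactive (possibly dishonest) prover $\mathcal{S}^{\mathcal{A}}$ for the underlying PCIP $\mathsf{\Pi}$, and then derive the advertised security property of $\mathsf{FS[\Pi]}$ from the assumed security of $\mathsf{\Pi}$. Because $n$ is constant (the PCIP has constant round complexity) and $\mathcal{A}$ makes polynomially many queries $q$, the multiplicative loss $n!/(2q+n+1)^{2n}$ is inverse-polynomial, which is precisely what is needed to preserve non-negligible success probabilities under the reduction. Since the simulator $\mathcal{S}$ is a polynomial-time black-box algorithm, the computational flavour of $\mathcal{A}$ is inherited by $\mathcal{S}^{\mathcal{A}}$, so the same reduction simultaneously handles the computational and the statistical variants of both properties.

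For soundness, the argument I would give is by contraposition. Suppose there exists an FS adversary $\mathcal{A}$ (efficient in the computational case, unbounded in the statistical case) together with some $x^\circ \notin \mathcal{L}$ such that $\mathcal{A}$ produces an accepting FS proof for $x^\circ$ with non-negligible probability $p$. Applying Corollary~\ref{cor:mFS} to this $x^\circ$ yields an interactive prover $\mathcal{S}^{\mathcal{A}}$ that convinces the honest verifier $\mathcal{V}$ on input $x^\circ$ with probability at least $\tfrac{n!}{(2q+n+1)^{2n}}\, p - \epsilon_{x^\circ}$. As soon as $\epsilon_{x^\circ}$ is negligible, this contradicts the assumed soundness of $\mathsf{\Pi}$ on the false statement $x^\circ$.

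For the quantum proof-of-knowledge property, the reduction is a two-step composition. Given any $\mathcal{A}$ that produces accepting FS proofs for some $x^\circ$ with non-negligible probability, Corollary~\ref{cor:mFS} first supplies a black-box, polynomial-time interactive prover $\mathcal{S}^{\mathcal{A}}$ that is convincing on $x^\circ$ with polynomially related probability. The quantum PoK extractor for $\mathsf{\Pi}$, run with $\mathcal{S}^{\mathcal{A}}$ as its prover (and hence with oracle access to $\mathcal{A}$ and the simulated QROM inside $\mathcal{S}$), then outputs a witness $w$ with $(x^\circ,w) \in R$ with probability polynomially related to the convincing probability of $\mathcal{S}^{\mathcal{A}}$, and hence to the success probability of $\mathcal{A}$. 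This composed algorithm is the required PoK extractor for $\mathsf{FS[\Pi]}$.

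The main obstacle I anticipate is the additive error term $\epsilon_{x^\circ}$ in Corollary~\ref{cor:mFS}: for a single statement it is only usefully bounded when the challenge space $\mathcal{C}$ is super-polynomial, since only the \emph{sum} $\sum_{x^\circ}\epsilon_{x^\circ}$ is guaranteed to be at most $n!/|\mathcal{C}|$. If the given $\mathsf{\Pi}$ does not have super-polynomial $\mathcal{C}$, Remark~\ref{rem:extendC} must be invoked first to pass to a variant $\mathsf{\Pi}'$ with enlarged (but partially ignored) challenges, and the argument is carried out on $\mathsf{FS[\Pi']}$ before transferring the conclusion back to $\mathsf{FS[\Pi]}$. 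A secondary, milder point is checking that the quantum PoK extractor of $\mathsf{\Pi}$ composes cleanly with the black-box simulator $\mathcal{S}$; since $\mathcal{S}^{\mathcal{A}}$ is a standard-form quantum interactive prover and the rewinding performed by a quantum PoK extractor is agnostic to the prover's internal structure, this should be verification rather than a genuine difficulty.
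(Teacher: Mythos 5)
Your proposal is correct and follows essentially the same route as the paper: the paper's proof simply applies Corollary~\ref{cor:mFS} to turn an $\mathsf{FS[\Pi]}$-adversary into a PCIP prover with polynomially related success probability (using that $n$ is constant and $q$ polynomial) and then defers to the arguments of Corollaries 13 and 16 of \cite{DFMS19}, which are exactly the contrapositive soundness reduction and the extractor composition you spell out. If anything, you are more careful than the paper's two-sentence proof, which silently drops the additive term $\epsilon_{x^\circ}$ that you correctly handle via a super-polynomial challenge space or Remark~\ref{rem:extendC}.
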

\begin{proof}
	Corollary \ref{cor:mFS} turns any dishonest prover ${\cal A}_{\mathsf{FS[\Pi]}}$ for $\mathsf{FS[\Pi]}$ with success probability $\epsilon$ into a dishonest prover ${\cal A}_{\mathsf{\Pi}}$ for $\mathsf{\Pi}$, with success probability $\epsilon\cdot(2q+1)^{-2n}$, where $2n+1$ is the number of rounds in $\mathsf{\Pi}$. Since $n$ is constant and $q$ is polynomial in the security parameter, the success probabilities of the respective provers are polynomially related. The claimed implications follow now using the same arguments as in Corollaries 13 and 16 in \cite{DFMS19}.
\qed\end{proof}

\section{Tightness of the reductions}

Here, we show tightness of our results. We start with proving tightness of Theorems \ref{thm:main} and \ref{thm:FS} (up to essentially a factor $4$). This implies that a $O(q^2)$-loss is unavoidable in general. Indeed, the following result shows that for a large and natural class of \sigps $\mathsf{\Sigma}$, there exists an attack against $\mathsf{FS[\Sigma]}$ that succeeds with a probability $q^2$ times larger than the best attack against $\mathsf{\Sigma}$. The attack is based on an application of Grover's quantum algorithm for unstructured search. 

To our surprise, we could not find an analysis of Grover's algorithm in the regime we require in the literature. Grover search has been analyzed in the case of an unknown number of solutions \cite{BBGT98}, but the focus of that work is on analyzing the expected number of queries required to find a solution, while we analyze the probability with which the Grover search algorithm succeeds for  a \emph{fixed but arbitrary} number of queries. 

\begin{thm}\label{thm:qsqauredboost}
	Let ${\cal L}$ be a language, and let $\mathsf{\Sigma}$ be a \sigp for ${\cal L}$ with challenge set ${\cal C}$, special soundness and perfect honest-verifier zero-knowledge. Furthermore, we assume that the triples $(a,c,z)$ produced by the simulator ${\cal S}_{\mathrm{ZK}}(x)$ are always accepted by the verifier even for instances $x \not\in  \cal L$, and that $a$ has min-entropy $\gamma$.%
	\footnote{These additional assumptions on the simulator could be avoided, but they simplify the proof. Furthermore, for typical \sigps they are satisfied. In particular, the simulated transcripts for hard instances are accepted by the verifier with high probability. Otherwise, the two polynomial-time algorithms could otherwise be used to solve the hard instances, a contradiction. } Then for any $q$ such that $(q^2+1)\cdot e^2\cdot (5q)^6 < |{\cal C}|$ and $2^\gamma/(5q)^3 > 2$, there exists a $q$-query dishonest prover that succeeds with probability $q^2/|{\cal C}|$ in producing a valid $\mathsf{FS[\Sigma]}$-proof for an instance $x \not\in \cal L$. 
\end{thm}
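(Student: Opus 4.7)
The plan is to construct a dishonest prover that uses Grover's algorithm to quadratically amplify the classical Fiat--Shamir attack based on the honest-verifier zero-knowledge simulator. Classically, running $\mathcal{S}_{\mathrm{ZK}}(x)$ once produces an accepting transcript $(a, c, z)$; since the first message $a$ has min-entropy $\gamma$, the FS-derived challenge $H(x,a)$ agrees with the simulator's $c$ with probability $1/|\mathcal{C}|$, in which case $(a, z)$ is a valid $\mathsf{FS[\Sigma]}$-proof of $x$ even for $x \notin \mathcal{L}$, by the assumption that $\mathcal{S}_{\mathrm{ZK}}$ produces accepting transcripts for all $x$. Grover should boost this to roughly $q^2/|\mathcal{C}|$.

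Concretely, I would fix $N := (5q)^3$ and, implementing the classical polynomial-time simulator coherently, consider the family $\{(a_r, c_r, z_r)\}_{r \in [N]}$ with $(a_r, c_r, z_r) := \mathcal{S}_{\mathrm{ZK}}(x; r)$. Define the marking function $f : [N] \to \{0,1\}$ by $f(r) = 1$ iff $H(x, a_r) = c_r$; each evaluation of $f$ costs one (controlled) query to $H$. The adversary starts from the uniform superposition over $[N]$, applies $q$ Grover iterations with oracle $f$, measures to obtain $r^\ast$, and outputs the proof $(a_{r^\ast}, z_{r^\ast})$. By the assumption on $\mathcal{S}_{\mathrm{ZK}}$, whenever $f(r^\ast) = 1$ the FS-verifier accepts, so it suffices to lower bound $\Pr[f(r^\ast) = 1]$.

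For the analysis let $k := |\{r : f(r) = 1\}|$. Over random $H$ and simulator randomness, each $f(r)$ individually satisfies $\Pr[f(r) = 1] = 1/|\mathcal{C}|$, so $\mathbb{E}[k] = N/|\mathcal{C}|$. Conditional on $k$, standard Grover gives success probability $\sin^2\bigl((2q+1)\arcsin\sqrt{k/N}\bigr)$, which via the inequalities $\sin^2\theta \geq \theta^2 - \theta^4/3$ and $\arcsin u \geq u$ is at least $(2q+1)^2 k/N - (2q+1)^4 k^2/(3N^2)$ while the argument stays in $[0,\pi]$. Taking expectation, the leading term contributes $(2q+1)^2/|\mathcal{C}| \geq 4q^2/|\mathcal{C}|$, and the hypotheses on $|\mathcal{C}|$ and $\gamma$ are calibrated precisely so as to absorb the quadratic correction together with the contribution of the bad event of a collision among the $a_r$'s (whose probability is bounded by $\binom{N}{2} \cdot 2^{-\gamma}$ using the min-entropy $\gamma$) into at most $3q^2/|\mathcal{C}|$, leaving the required bound.

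The main obstacle will be handling the tail regime in which $k$ is atypically large, because then the Grover approximation breaks and $\sin^2\bigl((2q+1)\arcsin\sqrt{k/N}\bigr)$ may even drop below the desired bound. I would control this by a Chernoff/Poisson-style tail estimate of the form $\Pr[k > K] \leq \bigl(e\,\mathbb{E}[k]/K\bigr)^K$ at a threshold $K$ of order $(5q)^3$, multiplied by the trivial success-probability bound $1$; the resulting $e^K$ factor is the source of the $e^2$ in the hypothesis $(q^2+1) \cdot e^2 \cdot (5q)^6 < |\mathcal{C}|$. A minor additional step is to replace the random oracle $H$ by a $2(q+1)$-wise independent function in the adversary's own simulation, using Zhandry's equivalence for $q$-query quantum algorithms, so that the whole prover is genuinely polynomial-time.
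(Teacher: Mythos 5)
Your overall strategy---a Grover search over the simulator's randomness with marking condition $H(x,a)=c$, a lower bound on the amplified success probability in the typical regime, and a tail bound to rule out ``overshooting''---is the same as the paper's. The genuine gap is in how you handle correlations among the marked items. You restrict the search to $N=(5q)^3$ coins and propose to exclude collisions among the $a_r$'s via the union bound $\binom{N}{2}\cdot 2^{-\gamma}$. But the hypothesis only gives $2^{\gamma}>2(5q)^3=2N$, so $\binom{N}{2}\cdot 2^{-\gamma}<N/4\gg 1$: the bound is vacuous, and making it meaningful would require $2^{\gamma}\gg(5q)^6$, strictly more min-entropy than the theorem assumes. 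Collisions cannot simply be ignored, either: by special soundness (for $x\notin\mathcal L$, where the simulator still outputs accepting triples) all coins producing the same $a$ produce the same $c$, so their indicators $f(r)$ are perfectly correlated; in the extreme case where $m$ coins share one first message, $k$ takes only the values $0$ and $m$, the conditional Grover success no longer tracks $(2q+1)^2k/N$, and the overall success probability can collapse to the classical $1/|\mathcal C|$.

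The paper circumvents this by running Grover over the \emph{entire} coin space $R$ and using the min-entropy only through the inequality $p_1^H\le 2^{-\gamma}X^H$, where $X^H$ counts the \emph{distinct} marked first-messages. Since $H$ is independent on distinct inputs, $X^H$ is a sum of at least $2^{\gamma}$ genuinely independent Bernoulli$(1/|\mathcal C|)$ indicators (one per equivalence class of coins), so a Chernoff bound controls the overshooting event $p_1^H>\sin^2\bigl(\pi/(6q+3)\bigr)$. There the hypothesis $2^{\gamma}/(5q)^3>2$ enters as a lower bound on the \emph{exponent} of the Chernoff bound, and $(q^2+1)e^2(5q)^6<|\mathcal C|$ makes the resulting tail probability at most $1/\bigl((q^2+1)|\mathcal C|\bigr)$---not, as you suggest, to absorb a collision term. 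Relatedly, your tail threshold ``of order $(5q)^3$'' cannot be right: with $\E[k]=N/|\mathcal C|\ll1$ the non-overshooting condition is $k\lesssim N/q^2$, i.e.\ a threshold of order $q$, and a claimed factor $e^K$ with $K$ of order $(5q)^3$ is inconsistent with the $e^2$ in the hypothesis. Finally, the $2(q+1)$-wise-independence step is unnecessary here: the dishonest prover has oracle access to the actual $H$ and simply queries it, so there is nothing to derandomize.
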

The idea of the attack against $\mathsf{FS[\Sigma]}$ is quite simple. For a \sigp that is {\em special} honest-verifier zero-knowledge, meaning that the simulation works by first sampling the challenge $c$ and the repsonse $z$ and then computing a fitting answer $a$ as a function $a(c,z)$, one simply does a Grover search to find a pair $(c,z)$ for which $H\bigl(x,a(c,z)\bigr) = c$. For a typical $H$, this will give a quadratic improvement over the classical search, which, for a random $H$, succeeds with probability $q/|{\cal C}|$ (due to the special soundness). A subtle issue is that, for some (unlikely) choices of $H$, there are actually {\em many} $(c,z)$ for which $H\bigl(x,a(c,z)\bigr) = c$, in which case the Grover search ``overshoots''. In the formal proof below, this is dealt with by controlling the probability of $H$ having this (unlikely) property. Also, it removes the {\em special} honest-verifier zero-knowledge property by doing the Grover search over the randomness of the simulator, which requires some additional caution. 

\begin{rem}\label{rem:AttackExt}
It is not hard to see that Theorem~\ref{thm:qsqauredboost} still holds in the following two variations of the statement. (1) $H(x,a)$ is random and independent for different choices of $a$, but is {\em not} necessarily independent for different choices of $x$. (2) The \sigp $\mathsf\Sigma$ is replaced by ${\mathsf\Sigma}'$, which has its challenge enlarged with a certain number of bits that are ignored otherwise, in line with Remark~\ref{rem:extendC}, and $\mathsf{FS[\Sigma']}$ then uses an $H$ with a correspondingly enlarged range.%
\footnote{While (1) follows by inspecting the proof, (2) holds more generically: the dishonest prover attacking $\mathsf{FS[\Sigma']}$ simply runs the prover attacking $\mathsf{FS[\Sigma]}$ but enlarges the output register of the hash queries, with the corresponding state being set to be the fully mixed state in each query, and then dismisses these additional qubits again.} 
\end{rem}

\begin{proof}
	Let ${\cal S}_{\mathrm{ZK}}$ be the zero-knowledge simulator given by the perfect honest-verifier zero-knowledge property of $\mathsf{\Sigma}$. Consider an adversary $\mathcal{A}_{FS}$ against $\mathsf{FS[\Sigma]}$, that works as follows for an arbitrary instance $x\notin \mathcal{L}$:
	\begin{itemize}
		\item Define the function $f^H: R\rightarrow \{0,1\}$ (where $R$ is the set of random coins for ${\cal S}_{\mathrm{ZK}}$) as 
		$$
		f^H(\rho) = \begin{cases}
		1&\text{for }{\cal S}_{\mathrm{ZK}}(x;\rho)\rightarrow (a,c,z) \wedge H(x||a) = c
		\\0&\text{otherwise}.
		\end{cases}
		$$
		\item Use Grover's algorithm for $q$ steps, to try and find $\rho$ s.t. $f(\rho) = 1$
		\item Run ${\cal S}_{\mathrm{ZK}}(x;\rho) \rightarrow (a,c,z)$ and output $(x,a||z)$. 
	\end{itemize}
	Let $p_1^H$ be the fraction of random coins from $R$ that map to $1$ under $f^H$. Note that by the special soundness of $\Sigma$, in any accepting triple $a$ determines $c$ and we thus have $\E_H[p_1^H] = \frac{1}{|\cal C|}$. By the way Grover works, after $q$ iterations (requiring $q$ queries to $H$) the probability $p_2^H$ of finding such an input is $\sin^2((2q+1)\Theta^H)$, where $0\leq \Theta^H \leq \pi/2$ is such that $\sin^2(\Theta^H) = p_1^H$. Now as long as $\Theta$ is not too large to begin with (i.e. as long as the Grover search will not `overshoot'), $p_2^H$ is approximately a factor $q^2$ larger than $p_1^H$. Our goal will be to show that also on average over $H$, the improvement is at least $q^2$. To this end we define $H_{\text{bad}} := \{H : p_1^H > \sin^2(\frac{\pi}{6q+3})\}$ and $H_{\text{good}}$ its complement. Then,
\begin{align*}
	\E_H[p_2^H] &= (1-\alpha)\cdot\E_{H}\left[p_2^H | H\in H_\text{good}\right] + \alpha\cdot \E_{H}\left[p_2^H|H\in  H_{\text{bad}}\right]\\
	&\geq (1-\alpha)\cdot\E_{H}\left[p_2^H|H\in H_\text{good}\right]
\end{align*}
	where $\alpha=\Pr_H[H\in H_{\text{bad}}]$ and $1-\alpha = \Pr_H[H\in H_{\text{good}}]$.
	
	We first compute $\E_{H_{\text{good}}}\left[p_2^H\right]$. Let $H\in H_{\text{good}}$. We have $(2q+1)\Theta^H \leq \frac{\pi}{3}$. Since $\frac{\text{d}}{\text{d}\Theta}\sin(\Theta) = \cos(\Theta)\geq 1/2$ for $\Theta\in [0,\frac{\pi}{3}]$, and $\Theta \geq \sin(\Theta)$, it follows that
	$$
	\sin((2q+1)\cdot\Theta^H) \qquad\geq\qquad \sin(\Theta^H) + \frac{2q\cdot\Theta^H}{2}\qquad \geq\qquad (q+1)\cdot\sin(\Theta^H).
	$$
	Using $\sin(\Theta)\geq 0$ for $\Theta\in [0,\frac{\pi}{3}]$, we obtain
	$$
	p_2^H = \sin^2((2q+1)\cdot\Theta^H) \geq (q+1)^2\cdot\sin^2(\Theta^H) = (q+1)^2\cdot p_1^H.
	$$
	Therefore,
	\begin{align}\label{eqn:p2}
	\begin{split}
	\E_H[p_2^H] \qquad & \geq \qquad\E_{H}\left[p_2^H | H\in H_\text{good}\right]\cdot \Pr_H[H\in H_{\text{good}}]\\
	&\geq\qquad(q+1)^2\cdot\E_{H}\left[p_1^H | H\in H_\text{good}\right]\cdot \Pr_H[H\in H_{\text{good}}]\\
	&\geq \qquad(q+1)^2\cdot\left(\E_H[p_1^H] - \Pr_H[H\in H_{\text{bad}}] \right). 
	\end{split}
	\end{align}
	
	Next we bound $\alpha = \Pr_H[H\in H_{\text{bad}}] = \Pr_H[p_1^H > \sin^2(\frac{\pi}{6q+3})]$. Note that for $p_1^H$ to be large, we need that for many first messages $a$, $H(a)$ must be the unique challenge $c$ for which there exist an accepting response. For a random $H$ this is unlikely to happen. Formally, we argue as follows, using the Chernoff bound eventually. 
	
	We first define the following equivalence relation:
	$$
	\rho \sim \rho' \text{ iff } {\cal S}_{\mathrm{ZK}}(\rho) = (a,c,z) \wedge {\cal S}_{\mathrm{ZK}}(\rho') = (a,c',z') \text{ for }\rho,\rho'\in R.
	$$
	$R/_{\!\sim}$ then denotes the set of equivalence classes $[\rho] = \{\rho' \in R \,|\,\rho \sim \rho'\}$. 
	By the perfect special soundness property and the assumptions on ${\cal S}_{\mathrm{ZK}}$, we have that $a$ determines $c$ (remember that $x\notin {\cal L}$), and therefore $f^H$ is constant on elements within a given equivalence class. 
	Thus, $f^H: R/_{\!\sim} \rightarrow \{0,1\}$. 
	For two distinct equivalence classes $[\rho]\neq [\rho']$, we have
	$$
	\Pr_H[f^H([\rho]) = 1 \wedge f^H([\rho']) = 1] = \Pr_H[f^H([\rho]) = 1]\cdot \Pr_H[f^H([\rho']) = 1] \, ,
	$$ 
	since $H(x||a)$ is chosen independently for different $a$. Finally, taking $X^H := \sum_{[\rho]} f^H([\rho])$ we have
	\begin{align*}
	p_1^H& = \Pr_\rho[f^H(\rho)=1] = \frac{\sum_{\rho} f(\rho)}{|R|}\\
	& = \frac{\sum_{[\rho]} \left(f^H([\rho])\cdot |[\rho]|\right)}{|R|} \leq \frac{|[\rho_{\max}]|\cdot\sum_{[\rho]} f^H([\rho])}{|R|} = X^H \cdot 2^{-\gamma}
	\end{align*}
	where $[\rho_{\max}]$ is the $[\rho]$ that maximizes $|[\rho]|$. It follows that 
	\begin{align*}
	\alpha& = \Pr_H[p_1^H > \sin^2\left(\frac{\pi}{6q+3}\right)] \\
	&\leq\Pr_H\left[X^H > \sin^2\left(\frac{\pi}{6q+3}\right)\cdot 2^\gamma\right]\leq \Pr_H\left[X^H > \frac{2^\gamma}{|{\cal C}|} + \frac{2^\gamma}{(5q)^3}\right]
	\end{align*}
	where we used $\sin^2(x)> x^3$ for $0\leq x \leq 0.80$ and $\frac{\pi}{6q+3} > \frac{1}{5q} + \sqrt[3]{\frac{1}{|{\cal C}|}}$ for ${|\cal C|} > (5q)^3$ in the last inequality.
	By definition of $f$, for any $[\rho]$ we have $\Pr_H\left[f(\rho)=1\right]=\frac{1}{|{\cal C}|}$, hence 
	\begin{align*}
	\E_{H}\left[X\right]= \sum_{[\rho]}\E_H[f^H([\rho])]=\sum_{[\rho]}\Pr_H[f^H([\rho]) =1] =\frac{|R/_{\!\sim}|}{|{\cal C}|}\geq \frac{2^\gamma}{|{\cal C}|}.
\end{align*}
	We use the following Chernoff bound:
	\begin{align*}
	\Pr_H\left[X^H > (1+\delta)\cdot \E
	_{H}\left[X^H\right] \right] &< \left(\frac{e^{\delta}}{(1+\delta)^{1+\delta}}\right)^{\E
		_{H}\left[X^H\right]} < \left(\frac{e^{1+\delta}}{\delta^{1+\delta}}\right)^{\E
		_{H}\left[X^H\right]}\\
	& = \left(\frac{e}{\delta}\right)^{\E
		_{H}\left[X^H\right]\cdot(1+\delta)}.
	\end{align*}
	Setting $\delta:=\frac{|{\cal C}|}{(5q)^3}$, together with the inequalities derived above this leads to		
	\begin{align*}
	\alpha \leq 
	\left(\frac{e\cdot (5q)^3}{|{\cal C}|}\right)^{\frac{2^\gamma}{|{\cal C}|} + \frac{2^\gamma}{(5q)^3}} 
	< \frac{e^2\cdot (5q)^6}{|{\cal C}|^2}< \frac{1}{|{\cal C}|\cdot (q^2+1)}
	\end{align*}
	where we used $\frac{2^\gamma}{(5q)^3} > 2$ in the second to last, and $|{\cal C}| > (q^2+1)\cdot e^2\cdot (5q)^6$ in the last inequality.
	Plugging this bound into Equation \ref{eqn:p2}, we get
	$$
	\E_{H}[p_2^H] \geq(q^2+1)\cdot \left(p_1-\frac{1}{|{\cal C}|\cdot(q^2+1)}\right) = \frac{q^2}{|{\cal C}|} + \frac{1}{|{\cal C}|} - \frac{1}{|{\cal C}|} = \frac{q^2}{|{\cal C}|}.
	$$
	Thus, the success probability of our adversary $\mathcal{A}_{FS}$ after making $q$ queries to $H$ is at least $\frac{q^2}{|{\cal C}|}$. 
\qed\end{proof}

%
%
%

The tightness of Corollary \ref{cor:mFS} follows from the above tightness result for the case of \sigps in a fairly straightforward manner.

\begin{thm}\label{thm:qtothe2nboost}
	For every positive integer $n$, there exists a $(2n\!+\!1)$-round PCIP $\mathsf{\Pi}$ with soundness error $\epsilon$  and challenge space $\mathcal C$ such that $|{\cal C}| \geq 1/\epsilon$ and such that there exists a $q$-query dishonest prover $\cal A$ on $\mathsf{FS(\Pi)}$ with success probability $n^{-2n}q^{2n}\epsilon$. 
\end{thm}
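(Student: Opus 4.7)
The plan is to lift the $\Sigma$-protocol attack of Theorem~\ref{thm:qsqauredboost} to the $(2n+1)$-round setting by taking a sequential repetition of a suitably enlarged $\Sigma$-protocol and running $n$ independent Grover searches, one per round of the Fiat-Shamir transcript.

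Concretely, I would start from a $\Sigma$-protocol $\mathsf\Sigma_0$ satisfying the hypotheses of Theorem~\ref{thm:qsqauredboost} with challenge set $\mathcal{C}_0$ of size $1/\delta$. Applying the challenge-enlargement device from Remark~\ref{rem:AttackExt}(2), I would pad each challenge with $k$ ignored bits, choosing $k$ so that $2^k=1/\delta^{n-1}$; this yields a $\Sigma$-protocol $\mathsf\Sigma$ with $|\mathcal{C}|=1/\delta^n$ that retains soundness $\delta$. I would then define $\mathsf\Pi$ as the $n$-fold sequential repetition of $\mathsf\Sigma$, merging consecutive prover messages so that the protocol has exactly $2n+1$ rounds. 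By the standard sequential soundness-amplification argument, $\mathsf\Pi$ is a $(2n+1)$-round PCIP with soundness error $\epsilon=\delta^n$, and $|\mathcal{C}|=1/\epsilon$ exactly, matching the condition in the statement.

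The attacker on $\mathsf{FS[\Pi]}$ exploits the disjoint-prefix structure of the hash inputs in Definition~\ref{def:mFS}: the $i$-th FS-challenge is $H(i-1,c_{i-1},a_i)$ (with the convention $c_0=x$), so the restricted sub-oracles $H(i-1,\cdot,\cdot)$ for different $i$ are mutually independent uniformly random functions. Round by round, after the preceding rounds have fixed $(a_1,c_1,\ldots,a_{i-1},c_{i-1})$, the attacker runs Grover over the HVZK-simulator randomness $\rho$ using $q/n$ queries to find a $\rho$ such that $(a_i,c_i,z_i):={\cal S}_{\mathrm{ZK}}(x;\rho)$ satisfies $H(i-1,c_{i-1},a_i)=c_i$. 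Applying Theorem~\ref{thm:qsqauredboost} in conjunction with both items of Remark~\ref{rem:AttackExt}, to the sub-oracle $H(i-1,\cdot,\cdot)$ and to $\mathsf\Sigma$ (which has ignored challenge bits), I expect a per-round success probability of at least $(q/n)^2\delta$, provided the parameter conditions of Theorem~\ref{thm:qsqauredboost} are met, which is ensured by choosing $\mathsf\Sigma_0$ with sufficient min-entropy.

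Combining the $n$ rounds, the total query count is $n\cdot(q/n)=q$, and the per-round success probabilities multiply cleanly, giving a total success probability of at least $\bigl((q/n)^2\delta\bigr)^n=n^{-2n}q^{2n}\delta^n=n^{-2n}q^{2n}\epsilon$. The main obstacle I anticipate is justifying this clean multiplicative composition in the quantum setting, since after each round's Grover routine the attacker's global quantum state depends on the queried sub-oracle; the key point is that the independence of the sub-oracles $H(i-1,\cdot,\cdot)$ across $i$, together with the fact that each round's Grover analysis only touches its own sub-oracle, lets the per-round marginal probabilities be multiplied without any interference cross terms.
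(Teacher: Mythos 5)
Your proposal is correct and follows essentially the same route as the paper's proof: sequential repetition of a $\Sigma$-protocol satisfying Theorem~\ref{thm:qsqauredboost} (with challenges padded by ignored bits to meet the $|\mathcal{C}|\ge 1/\epsilon$ requirement), an $n$-fold round-by-round Grover attack using $q/n$ queries per round, and an appeal to both parts of Remark~\ref{rem:AttackExt} to handle the enlarged challenge space and the fact that each round's challenge is a uniformly random function only of that round's commitment. The only cosmetic difference is that you fold the padding into the $\Sigma$-protocol itself while the paper appends the random string $r_i$ at the PCIP level.
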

Before proving the theorem, we show how it implies the tightness of  Theorem  \ref{cor:mFS}.

\begin{cor}
	The security loss in the bound in Corollary \ref{cor:mFS} is optimal, up to a multiplicative factor that depends on $n$ only. 
\end{cor}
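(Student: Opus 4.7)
The plan is to use Theorem~\ref{thm:qtothe2nboost} as a black box and combine it with soundness to derive a matching lower bound on any reduction loss for $\mathsf{FS[\Pi]}$. Fix $n$ and let $\mathsf\Pi$, $\eps$, $\cal C$ and $\mathcal A_{FS}$ be as in Theorem~\ref{thm:qtothe2nboost}: $\mathsf\Pi$ has soundness error $\eps$, $|{\cal C}|\geq 1/\eps$, and $\mathcal A_{FS}$ is a $q$-query attacker on $\mathsf{FS[\Pi]}$ succeeding with probability at least $n^{-2n}q^{2n}\eps$ on some instance $x\notin{\cal L}$.

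The argument is then an elementary counting one. Suppose, hypothetically, there existed a reduction of the same form as Corollary~\ref{cor:mFS} that turns any $\mathsf{FS[\Pi]}$-attacker of success $p$ into a cheating $\mathsf\Pi$-prover of success $p/L(q,n) - \delta$, with additive error $\delta$ controlled by $1/|{\cal C}|$-type terms. Feeding $\mathcal A_{FS}$ into it would yield a cheating $\mathsf\Pi$-prover of success at least $n^{-2n}q^{2n}\eps/L(q,n) - \delta$. Since $\mathsf\Pi$ has soundness error $\eps$, this probability must be bounded by $\eps$, and (once $\delta$ is made negligible compared to $\eps$) rearrangement gives $L(q,n) \gtrsim q^{2n}/n^{2n}$. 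The upper bound $(2q+n+1)^{2n}/n!$ from Corollary~\ref{cor:mFS} exceeds this lower bound by a factor at most $4^n\cdot n^{2n}/n!$ for large $q$, which depends only on $n$ (and is bounded by $(4en)^n$ by Stirling). This is exactly the claimed tightness statement.

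The main obstacle is making the comparison rigorous despite the additive error terms both in Corollary~\ref{cor:mFS} and in any hypothetical improved reduction. This is resolved by the standard enlargement trick from Remark~\ref{rem:extendC}: padding $\cal C$ with ignored bits shrinks the additive error to an arbitrarily small fraction of $\eps$, and by Remark~\ref{rem:AttackExt} the Grover-based attack of Theorem~\ref{thm:qtothe2nboost} transfers to the padded protocol with its $q^{2n}$-scaling intact. After this adjustment the quantitative comparison above goes through cleanly, establishing that any reduction with loss asymptotically better than $q^{2n}$ (up to an $n$-only factor) would contradict soundness of $\mathsf\Pi$.
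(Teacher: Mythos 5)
Your proposal is correct and follows essentially the same route as the paper: combine the $n^{-2n}q^{2n}\epsilon$ attack from Theorem~\ref{thm:qtothe2nboost} with the reduction of Corollary~\ref{cor:mFS}, use the guarantee $|\mathcal{C}|\geq 1/\epsilon$ to absorb the additive error into a multiple of $\epsilon_{\mathsf\Pi}$, and rearrange to sandwich the loss up to an $n$-dependent constant. The paper merely phrases the conclusion as a two-sided bound $c_1 q^{2n}\epsilon_{\mathsf\Pi}\le \epsilon_{\mathsf{FS(\Pi)}}(q)\le c_2 q^{2n}\epsilon_{\mathsf\Pi}$ rather than as a lower bound on a hypothetical reduction's loss, which is a presentational difference only.
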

\begin{proof}
	Let  ${\mathsf\Pi}$ be a PCIP  as shown to exist in Theorem \ref{thm:qtothe2nboost}. Let $\epsilon_\Pi$, and $\epsilon_{\mathsf{FS(\Pi)}}(q)$, be the soundness error of $\mathsf \Pi$, and the one of  its Fiat Shamir transformation against $q$-query adversaries, respectively. By Theorem \ref{thm:qtothe2nboost}, 
		\begin{equation}
			\epsilon_{\mathsf{FS(\Pi)}}(q)\ge n^{-2n}q^{2n}\epsilon_{\mathsf \Pi}.
		\end{equation}
		Theorem \ref{cor:mFS}, on the other hand, yields
		\begin{align}
			\epsilon_{\mathsf \Pi}&\ge \frac{n!}{(2q+n+1)^{2n}}\epsilon_{\mathsf{FS(\Pi)}}(q)-\frac{n!}{|\mathcal C|}\\
				&\ge \frac{n!}{(2q+n+1)^{2n}}\epsilon_{\mathsf{FS(\Pi)}}(q)-n!\epsilon_{\mathsf \Pi},
		\end{align}
		where we used the condition on the challenge space size from Theorem \ref{thm:qtothe2nboost} in the last line. Rearranging terms we obtain
		\begin{align}
		\epsilon_{\mathsf{FS(\Pi)}}(q)&\le (2q+n+1)^{2n}\left(1+\frac 1{n!}\right)\epsilon_{\mathsf{\Pi}}(q)\\
		&\le 2(n+3)^2q^{2n}\epsilon_{\mathsf{\Pi}}(q),
			\end{align}
			where we have used $1\le q$ in the last line. In summary, we have constants $c_1=n^{-2n}$ and  $c_2= 2(n+3)^{2n}$ such that
			\begin{equation}
				c_1 q^{2n}\epsilon_{\mathsf \Pi}\le \epsilon_{\mathsf{FS(\Pi)}}(q)\le c_2 q^{2n}\epsilon_{\mathsf \Pi}.
			\end{equation}
\qed\end{proof}

\begin{proof} [of Theorem \ref{thm:qtothe2nboost}]
	Let $\hat{\mathsf \Sigma}$ be a \sigp for a language $\cal L$ fulfilling the requirements of Theorem~\ref{thm:qsqauredboost}. Let the challenge space be denoted by $\hat{ \mathcal C}$. Given an arbitrary positive integer, we define an $(2n\!+\!1)$-round PCIP $\mathsf\Pi$ for the same language $\cal L$ by means of $n$ sequential independent executions of $\hat{\mathsf \Sigma}$ . Concretely, the $2n+1$ messages of $\mathsf \Pi$ are given in terms of the messages  $\hat a_i, \hat c_i$ and $\hat z_i$  of the $i$-th repetition of $\hat{\mathsf \Sigma}$ as 
	\begin{align*}
		a_1&=\hat a_1\\
		c_i&=(\hat c_i, r_i)\  \mathrm{for}\ i=1,...,n\\
		a_i&=(\hat a_i, \hat z_{i-1})\ \mathrm{for}\ i=2,...,n, \ \mathrm{and}\\
		z&=\hat z_{n},
	\end{align*}
	where $r_i$ is an independent random string of arbitrary (but fixed) length, which is ignored otherwise (in line with Remark~\ref{rem:extendC}). 
	The purpose of $r_i$ is to make the challenge space $\cal C$ of $\mathsf\Pi$ arbitrary large, as required. 
	The verification procedure of $\mathsf \Pi$ simply checks if all the triples $(\hat a_i, \hat c_i, \hat z_i)$ are accepted by $\hat{\mathsf \Sigma}$. 
	By the special soundness property of $\hat{\mathsf \Sigma}$, the soundness error of this PCIP is $\epsilon=|\hat{\cal C}|^{-n}$. 
	
	Using Theorem \ref{thm:qsqauredboost}, we can attack the Fiat-Shamir transformation of $\hat{\mathsf \Sigma}$ repeatedly to devise an attack agains $\mathsf{FS(\Pi)}$: first use Theorem~\ref{thm:qsqauredboost} to find $\hat a_1$ and $\hat z_1$, then use it again to find $\hat a_2$ and $\hat z_2$, etc., having the property that with the correctly computed challenges these form valid triples for an instance $x \not\in \cal L$. In each invocation of Theorem~\ref{thm:qsqauredboost} we use a $q'$-query attack, which then succeeds with probability $q'^2/|{\cal \hat{C}}|$. Thus, using in total $q = n q'$ queries, we succeed in breaking $\mathsf{FS[\Pi]}$ with probability $q'^{2n}/|{\cal \hat{C}}|^n = n^{-2n}q^{2n}\epsilon$, as claimed. 
	
	There are two issues we neglected in the above argument. First, we actually employ Theorem~\ref{thm:qsqauredboost} for attacking a {\em variant} of $\hat{\mathsf \Sigma}$ that has its challenge enlarged (and thus is not special sound); and, second, the challenge $c_i$ is computed as 
	$$
	c_i = H(i-1,...,H(1,H(0,x,\hat a_1),\hat a_2),...,\hat a_i) \, ,
	$$
	which is {\em not} a uniformly random function of $x$ and $\hat a_i$ (but only of $\hat a_i$). However, by Remark~\ref{rem:AttackExt}, the attack from Theorem \ref{thm:qsqauredboost} still applies. 
\qed\end{proof}

\section{Applications}
\subsection{Digital signature schemes from multi-round Fiat-Shamir}\label{sec:sig}

One of the prime applications of the Fiat-Shamir transformation is the construction of  digital signature schemes from interactive identification schemes. In this context, multi-round variants have also been used. An example where a QROM reduction is especially desirable is MQDSS \cite{MQDSS}, a candidate digital signature scheme in the ongoing NIST standardization process for post-quantum cryptographic schemes \cite{NIST}. This digital signature scheme is constructed by applying the multi-round Fiat-Shamir transformation to the 5-round identification scheme by  Sakumoto, Shirai, and Hiwatari \cite{SSH} based on the hardness of solving systems of multivariate quadratic equations.

 In this section, we present a generic construction of a digital signature scheme based on multi-round FS, and give a proof sketch of its strong unforgeability under chosen message attacks. We refrain from giving a full, self-contained proof here so as to not distract from our main technical result and its implications. Many, though not all, parts of the argument are very similar to the ones made elsewhere for the 3-round case.

The following construction is a straightforward generalization of the original construction of Fiat and Shamir.

\begin{defi}[Fiat-Shamir signatures from a general PCIP]\label{def:mFS-sig}
	Given an $(2n\!+\!1)$-round public coin identification scheme $\mathsf{\Pi} = ({\Gen},{\cal P}, {\cal V})$ for a witness 
	relation $R$ 
	and a 
	hash function $H$ with appropriate domain and range equal to $\cal C$, we define the digital signature scheme $\mathsf{Sig[\Pi]}= (\Gen, \Sign, \Ver)$  as follows. The key generation algorithm $\Gen$ is just the one from $\Pi$. The signing algorithm $\Sign$, on input a secret key $sk$ and a message $m$, outputs
	\begin{align*}
	\sigma=(a_1,...,a_n,z)&\leftarrow  \Sign_{sk}(m)
	\end{align*}
	where $z$ and $a_i$ for $i=1,...,n$ are computed using ${\cal P}(pk)$, and the challenges are computed as 
	\begin{align*}
	c_1&=H(0, pk,m,a_1)\text{ and} \\
	c_i&=H(i-1,c_{i-1},a_i) \text{ for } i=2,...,n \, .
	\end{align*}
	The verification algorithm $\Ver$, on input a public key $pk$, a message $m$ and a signature $\sigma=(a_1,...,a_n,z)$, computes $c_i$ as specified above, outputs `accept' iff  ${\cal V}_{pk}(a_1,c_1,...,a_n,c_n,z) = 1$, denoted by $\Ver_{pk}(m,\sigma) = 1$.
	\\~\\
	We note that the above definition is equivalent to the following, alternative formulation: Let\linebreak $\Sign_{sk}(m)$ produce $\sigma$ by running $P_{FS}^H(x||m)$, and let $\mathsf{Verify}(m,\sigma)$ be equal to the outcome of\linebreak $V_{FS}^H(x||m)$, where $(P_{FS}^H,V_{FS}^H) = \mathsf{FS[\Pi^*]}$ and $\mathsf{\Pi^*} = ({\cal P^*,\cal V^*})$ is the identification scheme obtained from $\mathsf{\Pi}$ by setting ${\cal P^*}(x||m) = {\cal P}(x)$ and ${\cal V^*}(x||m) = {\cal V}(x)$ for any $m$. This alternative formulation will be convenient in the proof of Theorem \ref{thm:(s)UFCMA}.
\end{defi}
\begin{rem}
		As in the case of the plain multi-round Fiat-Shamir transformation, one can include arbitrary additional strings in the argument when computing the challenges $c_i$. Examples where this is done include the MQDSS signature scheme \cite{MQDSS}, where the message $m$ and the first commitment $a_1$ are also included in the argument for computing the second challenge, and Bulletproofs, where the challenges are computed by hashing the entire transcript up to that point \cite{Bulletproofs}.
\end{rem}
As an identification scheme is an interactive honest-verifier zero knowledge proof of knowledge of a secret key, the above signature scheme is a a non-interactive zero knowledge proof of  knowledge of a secret key according to Corollary \ref{cor:mFS}. For a digital signature scheme, however, the stronger security notion of (strong) unforgeability against chosen message ((s)UF-CMA) attacks is required.

In the following, we give a proof sketch for the fact that the above signature scheme is (s)UF-CMA. This fact follows immediately once we have convinced ourselves that a certain result by Unruh about the Fiat-Shamir transformation holds for the multi-round case as well: For the Fiat-Shamir transformation of \sigps, extractability implies a stronger notion of extractability enabling a proof of (s)UF-CMA \cite{Unruh2017}. Here, we just patch the parts of the proof from \cite{Unruh2017} that make use of the fact that the underlying PCIP has only three rounds.

For the following we need the notion of a PCIP having computationally unique responses.
\begin{defi}[Computationally unique responses - PCIP]
A $(2n\!+\!1)$-\switch{\linebreak}{}round PCIP 
$\mathsf{\Pi} = (\cal P, \cal V)$ is said to have {\em computationally unique responses} if given a partial transcript\switch{}{\linebreak} $(x,a_1,c_1,\ldots a_i,c_i)$ it is computationally hard to find two accepting conversations that both extend the partial transcript but differ in (at least) $a_{i+1}$ (here we consider $z$ to be equal to $a_{n+1}$), i.e.~for  $con_i=x,a_1,c_1,\ldots a_i,c_i,a_{i+1}^{(j)},c_{i+1}^{(j)}\ldots,a^{(j)}_n,c^{(j)}_n,z^{(j)}$, $j=1,2$ we have that 
$$
\Pr\left[{{\cal V}(con_1) = 1 \wedge {\cal V}(con_2) = 1}: (con_1,con_2)\leftarrow {\cal A}\right]
$$
is negligible for computationally bounded (quantum) ${\cal A}$, where $a^{(1)}_{i+1}\neq a^{(2)}_{i+1}$.

\end{defi}
Equipped with this definition, we can state the main result of this section.
\begin{thm}[(s)UF-CMA of multi-round FS signatures]\label{thm:(s)UFCMA}
	Let $\mathsf{\Pi}$ be a \switch{\linebreak}{}PCIP for some hard relation $R$, which is a quantum proof of knowledge and satisfies completeness, HVZK, and has unpredictable commitments\footnote{We take unpredictable commitments for PCIP's to be exactly the same as for \sigps, with the first message playing the role of the commitment.} as well as a superpolynomially large challenge space. Then $\mathsf{Sig[\Pi]}$ is existentially unforgeable under chosen message attack (UF-CMA).
If  $\mathsf{\Pi}$ in addition has computationally unique responses, $\mathsf{Sig[\Pi]}$ is {\em strongly} existentially unforgeable under chosen message attack (sUF-CMA).
\end{thm}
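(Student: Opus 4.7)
The plan is to follow Unruh's strategy for the three-round case \cite{Unruh2017} and to patch the two places where three-round-ness is used by invoking the multi-round results obtained above. Concretely, the reduction to the hardness of the relation $R$ proceeds in two main steps. First, a simulation step removes the secret key from the signing oracle by replacing each honestly produced transcript with an HVZK-simulated one and reprogramming $H$ at the corresponding chain inputs. Second, given a forger in this simulated game, one invokes the alternative formulation of Definition \ref{def:mFS-sig}, viewing $\mathsf{Sig[\Pi]}$ as $\mathsf{FS[\Pi^*]}$ for the PCIP $\mathsf{\Pi^*}$ that absorbs the message $m$ into the statement; Corollary \ref{cor:mFS} then turns the forger into a prover against $\mathsf{\Pi^*}$, and finally the quantum proof-of-knowledge extractor of $\mathsf{\Pi}$ yields a witness, contradicting hardness of $R$.

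The simulation step is the part that requires the most care in the multi-round setting. For each signing query on a message $m$, the reduction runs the HVZK simulator on $pk$ to obtain a full transcript $(a_1,c_1,\ldots,a_n,c_n,z)$, returns $\sigma=(a_1,\ldots,a_n,z)$ as the signature, and reprograms $H$ at the $n$ chain inputs $(0,pk,m,a_1)$ and $(i-1,c_{i-1},a_i)$ for $2\le i\le n$ to the values $c_1,\ldots,c_n$. Indistinguishability from the real signing oracle follows from an adaptive reprogramming lemma (which can either be taken off the shelf or derived in the style of Section \ref{secmainresult}) applied inductively $n$ times per signing query. The inputs to this lemma are the unpredictable commitments property of $\mathsf{\Pi}$, which bounds the probability that the adversary has queried $H$ at $(0,pk,m,a_1)$ prior to receiving the signature, together with the superpolynomial size of $\cal C$, which, combined with HVZK, makes each simulated $c_i$ close to uniform and thus makes the subsequent chain inputs $(i-1,c_{i-1},a_i)$ themselves unpredictable. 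Conditioning on all of these reprogrammings being undetected, the simulated signing oracle is statistically close to the real one.

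Given this simulator, consider a forger $\cal A$ outputting a valid forgery $(m^*,\sigma^*)$. For UF-CMA, $m^*$ was never queried, so the chain inputs for the forgery are with overwhelming probability disjoint from all reprogrammed points, and $\cal A$ can be seen as an attacker against $\mathsf{FS[\Pi^*]}$ for statement $pk\|m^*$. Corollary \ref{cor:mFS} yields an interactive prover against $\mathsf{\Pi^*}$ with success probability smaller by only a factor $O(q^{2n}/n!)$ (plus a negligible additive term), and the quantum proof-of-knowledge extractor of $\mathsf{\Pi}$ then produces a witness for $pk$ with non-negligible probability, contradicting hardness of $R$. For sUF-CMA one additionally has to treat forgeries on a queried message $m^*$ with $\sigma^*\neq\sigma$, where $\sigma=(a_1,\ldots,a_n,z)$ is the signature previously returned. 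Let $i^*$ be the smallest index at which $\sigma^*$ and $\sigma$ differ in the $a$-component (taking $a_{n+1}=z$). If $i^*=1$, the two chains diverge already at $(0,pk,m^*,\cdot)$ and the UF-CMA argument applies verbatim to the now-fresh chain. If $i^*>1$, then $\sigma^*$ and $\sigma$ share the common prefix $(a_1,c_1,\ldots,a_{i^*-1},c_{i^*-1})$ but produce distinct accepting extensions differing at $a_{i^*}$, which is exactly the situation ruled out by computationally unique responses, giving the desired contradiction.

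The main obstacle is the multi-round signature simulation: one must argue that the $n$ correlated reprogramming operations performed per signing query, iterated over polynomially many signing queries, remain jointly indistinguishable from a fresh random oracle. This requires an inductive argument in which the unpredictability of each $a_i$, conditioned on the previous part of the transcript and on the reprogrammed values $c_1,\ldots,c_{i-1}$, is established round by round from HVZK, unpredictable commitments, and $|\mathcal{C}|$ being superpolynomial; in the three-round case this collapses to the single application of an adaptive reprogramming bound used by Unruh.
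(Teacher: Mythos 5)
Your proposal is correct and follows essentially the same route as the paper: both patch Unruh's three-round argument at the same two places (the multi-round reprogramming in the zero-knowledge simulation of the signing oracle, justified by unpredictable commitments for the first chain input and the superpolynomially large challenge space for the later ones, and the chain-wise induction via computationally unique responses for the strong variant), and both obtain the extraction step from Corollary~\ref{cor:mFS}. The only difference is presentational: the paper routes the argument through Unruh's notion of (strong) simulation-sound extractability (his Theorems 24, 25 and 31), whereas you inline that abstraction directly into the signature game.
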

In \cite{Unruh2017} (Theorem 24, and  25,  respectively), it is proven that an extractable FS proof system (of an HVZK \sigp, and of an HVZK \sigp with computationally unique responses, respectively) satisfies the stronger notion of {\em (strong) simulation-sound extractability}. In addition, it is shown that such a FS proof system gives rise to a (s)UF-CMA signature scheme if the underlying relation is hard. Corollary \ref{cor:PresSoundPoK} implies that $\mathsf{FS[\Pi^*]}$ is indeed extractable if $\mathsf\Pi$ is extractable. Below we rely on the proof in \cite{Unruh2017} to argue simulation-sound extractability, only pointing out a particular difference for the multi-round case.
\begin{proof}[sketch]
Since $\mathsf{\Pi}$ is a quantum proof of knowledge, so is $\mathsf{\Pi^*}$. By Corollary \ref{cor:PresSoundPoK}, $\mathsf{FS[\Pi^*]}$ is a quantum proof of knowledge (extractable), and by Theorem 20 in \cite{Unruh2017} (which easily generalizes to the multi-round setting), completeness, unpredictable commitments\footnote{This property is required to have sufficient entropy on the inputs to the oracle that are reprogrammed by the zero-knowledge simulator ${\cal S}_{ZK}$. While ${\cal S}_{ZK}$ may reprogram the oracle on inputs $(i-1,c_{i-1},a_i)$ for $i>1$, it is enough to require the first message $a_1$ to have sufficient entropy, since with $c_{i-1}$, these later inputs all include a uniformly random element from the superpolynomially large challenge space.} and HVZK of $\mathsf{\Pi^*}$ together imply ZK for $\mathsf{FS[\Pi^*]}$. For the proof that $\mathsf{FS[\Pi^*]}$ is also simulation-sound extractable, we refer to the proof of Theorem 24 in \cite{Unruh2017}, noting only that in the hop from Game 1 to Game 2 we have to adjust the argument as follows: Let ${\cal S}_{ZK}$ be the zero-knowledge simulator that runs the HVZK simulator from $\mathsf{\Pi}^*$ and reprograms the oracle as necessary. We write $H_f$ for the oracle $H$ after it has been reprogrammed by ${\cal S}_{ZK}$, at the end of the run of ${\cal A}$. We have to show that $V_{FS}^{H_f}(x,a_1,\ldots,a_n,z) = 1$ implies $V_{FS}^{H}(x,a_1,\ldots,a_n,z) = 1$, where $(x,a_1,\ldots,a_n,z)$ is the final output of ${\cal A}$. Suppose the implication does not hold. Then either (i) $\ H_f(0,x,a_1)\neq H(0,x,a_1)$ or (ii) $ H_f(i-1,c_{i-1},a_i)\neq H(i-1,c'_{i-1},a_i)$ for some $i$, where $c_{i-1}$ is the $(i\!-\!1)$-st challenge as recomputed by $V_{FS}^{H_f}$ and $c'_{i-1}$ is the one computed by $V_{FS}^H$. In case (i) holds, ${\cal A}$ has queried $x$ and the corresponding forged proof that was output by ${\cal S}_{ZK}$ starts with $a_1$. In case (ii), assume that $H_f(j-1,c_{j-1},a_j)= H(j-1,c_{j-1},a_j)$ for all $j < i$, so that $c_{i-1} = c'_{i-1}$. Then,
$$
H_f(i-1,...,H(1,H(0,x, a_1), a_2),...,a_i) \neq H(i-1,...,H(1,H(0,x, a_1), a_2),...,a_i)
$$
which means that ${\cal A}$ either queried $x$ and the corresponding forged proof that was output by ${\cal S}_{ZK}$ starts with $a_1$, or else ${\cal A}$ has queried some $x'$ such that
\begin{align*}
H(i-2,\ldots,H(&1,H(0,x', a'_1),a'_2),\ldots a'_{i-1}) \\
&= H(i-2,\ldots,H(1,H(0,x, a_1),a_2),\ldots, a_{i-1})
\end{align*}
and $a_i = a'_i$, where $(a'_1,\ldots,a'_i)$ is part of the ${\cal S}_{ZK}$ proof resulting from the query $x'$. By the fact that $H$ is a random oracle, it is infeasible for $\mathcal{A}$ to find such an $x'$. 

In the context of weak simulation-sound extractability, the fact that ${\cal A}$ has queried $x$ is enough to derive a contradiction. For the strong variant, we now have that ${\cal S}_{ZK}$ has output $(x,a_1,a'_2,\ldots,a'_n,$\switch{}{\linebreak} $z')$ such that $${\cal V}(x,a_1,H_f(0,x,a_1),a'_2,c'_2\ldots,a'_n,c'_n,z') = 1$$ and ${\cal A}$ has output $(x,a_1,a_2,\ldots,a_n,z)$ such that $${\cal V}(x,a_1,H_f(0,x,a_1), a_2,c_2,\ldots,a_n,c_n,z) =1$$ (and ${\cal A}$ knows both since it interacted with ${\cal S}_{ZK}$). By the computationally unique responses property of $\mathsf{\Pi}$, it must be that $a_2 = a_2'$. But then it follows that $$c_2 = H_f(1,H_f(0,x,a_1), a_2) = H_f(1,H_f(0,x,a_1), a'_2) = c'_2$$ (remember that both proofs are accepting with respect to $H_f$) which in turn implies that $a_3 = a'_3$, etc. Thus, we obtain that ${\cal A}$ has output a proof that was produced by ${\cal S}_{ZK}$, yielding a contradiction. We conclude that $$V_{FS}^{H_f}(x,a_1,\ldots,a_n,z) = 1\text{ implies }V_{FS}^{H}(x,a_1,\ldots,a_n,z) = 1$$ except with negligible probability.

In the rest of the proof of Theorems 24 and 25 in \cite{Unruh2017}, no properties specific to a three-round scheme are used, and so the results extend to the PCIP context, that is, $\mathsf{FS[\Pi^*]}$ is (strongly) simulation-sound extractable. Now applying Theorem 31 from \cite{Unruh2017}, we obtain that $\mathsf{Sig[\Pi]}$ is (s)UF-CMA.
\qed\end{proof}

Together with the fact that commit-and-open PCIPs can easily be made quantum extractable in the right sense by using standard hash-based commitments based on a collapsing hash function, we obtain the security of the MQDSS signature scheme. Recall that the standard hash-based commitment scheme works as follows. On input $s$, the commitment algorithm samples a random opening string $u$ and outputs it together with the commitment $c=H(s,u)$. Opening just works by recomputing the hash and comparing it with $c$ . Note that, while this commitment scheme is collapse-binding \cite{Unruh2016}, we need the stronger property of collapsingness of the function defined by the commitment algorithm that, on input a string and some randomness, outputs a commitment (collapse-binding only requires the collapsingness with respect to the committed string, not the opening information).
\begin{cor}[sUF-CMA of MQDSS]
	Let  $\mathsf\Pi_{\mathrm{SSH}}$ be the  5-round identification scheme from \cite{SSH} repeated in parallel a suitable number of times and instantiated with the standard hash-based commitment scheme using a collapsing hash function. Then the Fiat-Shamir signature scheme constructed from  $\mathsf\Pi_{\mathrm{SSH}}$ is sUF-CMA.
\end{cor}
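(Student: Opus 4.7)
The plan is to verify that the identification scheme $\mathsf\Pi_{\mathrm{SSH}}$, instantiated as in the statement, satisfies every hypothesis of Theorem~\ref{thm:(s)UFCMA}, and then simply apply that theorem. The verification splits naturally into structural properties inherited from~\cite{SSH} and two properties that crucially exploit the collapsing assumption on the underlying hash function.

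First I would dispatch the easier structural prerequisites. Completeness and honest-verifier zero-knowledge of $\mathsf\Pi_{\mathrm{SSH}}$ are established in~\cite{SSH} and are preserved under parallel repetition. The underlying witness relation is the multivariate quadratic ($\mathcal{MQ}$) relation, whose hardness is precisely the standard assumption of MQDSS. Unpredictable commitments hold because the standard hash-based commitment prepends a freshly sampled uniform opening string of superlogarithmic length, so the first message has high min-entropy independently of the committed value. Finally, choosing the number of parallel repetitions suitably yields a superpolynomially large challenge space and simultaneously pushes the soundness error to negligible.

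Next I would handle computationally unique responses. Every message of $\mathsf\Pi_{\mathrm{SSH}}$ after the first commitment round consists of openings of previously sent hash-based commitments (with, in the 5-round version, an intermediate commit-and-open step depending on the first challenge). Two accepting completions of a fixed partial transcript that differ in some $a_{i+1}$ would therefore yield two distinct valid openings of a common commitment, contradicting the collapse-binding property of the standard hash-based commitment, which in turn follows from collapsingness of the hash function by Unruh~\cite{Unruh2016}. A short case analysis over which message of the 5-round protocol $a_{i+1}$ is, together with the collision-resistance implied by collapsingness, rules out the remaining degenerate cases.

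The main obstacle, and the step I would spend most care on, is the quantum proof-of-knowledge property. Classically, $\mathsf\Pi_{\mathrm{SSH}}$ admits an $\mathcal{MQ}$-witness extractor from suitably related accepting transcripts (coming from different challenge values); the task is to lift this to a quantum extractor in the sense of~\cite{Unruh2012}, i.e.\ to rewind a quantum prover to obtain such multiple transcripts without irreparably disturbing its state. Here the collapsing property is the decisive ingredient: it ensures that measuring the opened message inside a hash-based commitment is computationally indistinguishable from not measuring it, so that one can \emph{sequentially} extract the openings needed for different branches of the extractor with only a negligible loss in the rewinding success probability. Combining this collapsing-based quantum rewinding with the classical $\mathcal{MQ}$ extractor yields the required quantum PoK. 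Once quantum PoK is in place, all hypotheses of Theorem~\ref{thm:(s)UFCMA} are verified and sUF-CMA of the Fiat-Shamir signature scheme built from $\mathsf\Pi_{\mathrm{SSH}}$ follows immediately.
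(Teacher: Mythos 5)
Your proposal is correct and follows essentially the same route as the paper: computationally unique responses are derived from the commit-and-open structure together with the (collapse-)binding of the hash-based commitments, the quantum proof-of-knowledge property is obtained by lifting the classical q2/$\mathcal{MQ}$ extractor via collapsing-based quantum rewinding in the sense of Unruh (this is exactly what the paper's appendix Corollary~\ref{cor:SSH-PoK} provides), and the remaining hypotheses of Theorem~\ref{thm:(s)UFCMA} are checked directly. The only nuance worth noting is that for the rewinding step the paper is careful to require collapsingness of the full commitment function (message \emph{and} opening randomness), which is slightly stronger than collapse-binding alone, but your appeal to the collapsingness of the underlying hash function covers this.
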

\begin{proof}[sketch]
	In $\mathsf\Pi_{\mathrm{SSH}}$, the honest prover's first message consists of two commitments, and the second and final messages contain functions of  the strings committed to in the first message. This structure, together with the computational binding property (implied by the collapse binding property) of the commitments, immediately implies that $\mathsf\Pi_{\mathrm{SSH}}$ has computationally unique responses.  According to Corollary \ref{cor:SSH-PoK} in the appendix, $\mathsf\Pi_{\mathrm{SSH}}$ is a quantum proof of knowledge. It also has HVZK according to \cite{SSH}. Finally, the first message of $\mathsf\Pi_{\mathrm{SSH}}$ is clearly unpredictable. An application of Theorem \ref{thm:(s)UFCMA} finishes the proof.
\qed\end{proof}


\subsection{Sequential Or Proofs}

A second application of our multi-input version of the measure-and-reprogram result is to the OR-proof as introduced by Liu, Wei and Wong \cite{LWW04} and further analyzed by Fischlin, Harasser and Janson \cite{FHJ}. This is an alternative (non-interactive) proof for proving existence/knowledge of (at least) one of two witnesses without revealing which one, compared to the well known technique by Cramer, Damg\r{a}rd and Schoenmakers \cite{CDS94}. 

Formally, given two $\Sigma$-protocols $\mathsf{\Sigma}_0$, and $\mathsf{\Sigma}_1$, for languages ${\cal L}_0$, and ${\cal L}_1$, respectively, \cite{LWW04} proposes as a non-interactive proof for the OR-language ${\cal L}_{\vee} = \{ (x_0,x_1) \,:\, x_0 \!\in\! {\cal L}_0 \vee x_1 \!\in\! {\cal L}_1\}$ a quadruple $\pi_{\vee} = (a_0,a_1,z_0,z_1)$ such that
$$
V_{\vee}^H(x_0,x_1,\pi_{\vee})\! :=\! \bigl[V_0\bigl(x_0,a_0,H(1,x_0,x_1,a_1),z_0\bigr) \wedge V_1\bigl(x_1,a_1,H(0,x_0,x_1,a_0),z_1\bigr)\bigr]
$$
is satisfied. Fischlin et al. call this construction {\em sequential OR proof}. 
We emphasize that the two challenges $c_0$ and $c_1$ are computed ``over cross'', i.e., the challence $c_0$ for the execution of $\mathsf{\Sigma}_0$ is computed by hashing $a_1$, and vice versa. It is straightforward to verify that if $\mathsf{\Sigma}_0$ and $\mathsf{\Sigma}_1$ are special honest-verifier zero-knowledge, meaning that for any challenge $c$ and response $z$ one can efficiently compute a first message $a$ such that $(a,c,z)$ is accepted, then it is sufficient to be able to succeed in {\em one} of the two {\em interactive} protocols $\mathsf{\Sigma}_0$ and $\mathsf{\Sigma}_1$ in order to honestly produce such an OR-proof $\pi_{\vee}$. Thus, depending on the context, it is sufficient that one instance is in the corresponding language, or that the prover knows one of the two witnesses, to produce $\pi_{\vee}$. 
Indeed, if, say, $x_0 \in {\cal L}_0$ (and a witness $w_0$ is available), then $\pi_{\vee}$ can be produced as follows. Prepare $a_0$ according to $\mathsf{\Sigma}_0$, compute $c_1 := H(0,x_0,x_1,a_0)$ and simulate $z_1$ and $a_1$ using the special honest-verifier zero-knowledge property of $\mathsf{\Sigma}_1$ so that $V_1(x_1,a_1,c_1,z_1)$ is satisfied, and then compute the response $z_0$ for the challenge $c_0 := H(1,x_0,x_1,a_1)$ according to $\mathsf{\Sigma}_0$. 

On the other hand, intuitively one expects that one of the two instances must be true in order to be able to successfully produce a proof. Indeed, \cite{LWW04} shows security of the sequential OR in the (classical) ROM. \cite{FHJ} go a step further and show security in the (classical) {\em non-programmable} ROM. Here we show that our multi-input version of the measure-and-reprogram result (as a matter of fact the 2-input version) implies security in the QROM. 

\begin{thm}
There exists a black-box quantum polynomial-time interactive algorithm $\hat{\cal P}$, which first outputs a bit $b$ and two instances $x_0,x_1$, and in a second stage acts as an interactive prover that runs $\mathsf{\Sigma}_b$ on instance $x_b$, such that for any adversary $\cal A$ making $q$ queries to a uniformly random function $H$ and for any $x_0^\circ,x_1^\circ$: 
	\begin{align*}
	&\Pr\bigr[x_0 = x_0^\circ \,\wedge\, x_1 = x_1^\circ \,\wedge\, v_b = accept :(b,x_0,x_1,v_b) \leftarrow \langle\hat{\cal P}^{\cal A} , {\cal V}_b\rangle\bigl] 
	\\ &
	\geq \frac{1}{(2q+1)^4} \Pr_H\bigr[x_0 = x_0^\circ \,\wedge\, x_1 = x_1^\circ \,\wedge\, V^H_{\vee}(x_0,x_1,\pi_{\vee}) : (x_0,x_1,\pi_{\vee}) \leftarrow {\cal A}^H \bigl] \, .
	\end{align*} 
\end{thm}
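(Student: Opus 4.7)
The plan is to view the verification predicate $V^H_\vee$ of the OR-proof as a predicate depending on just two hash inputs, namely $\xi_0 := (0, x_0, x_1, a_0)$ and $\xi_1 := (1, x_0, x_1, a_1)$, and to apply the 2-input measure-and-reprogram result, Theorem~\ref{thm:multiplemar}. The leading bit guarantees $\xi_0 \neq \xi_1$ unconditionally, so the no-duplicate-entries hypothesis is for free. The key structural feature of the sequential OR-proof that makes the reduction go through is the ``over-cross'' placement of challenges: the challenge $c_b$ verifying the $\Sigma_b$-branch equals $H(\xi_{1-b})$, i.e.\ it depends on $a_{1-b}$, not $a_b$. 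This will allow $\hat{\cal P}$ to commit to attacking $\Sigma_b$ right after extracting $a_b$, while the challenge it needs to feed $\cal S$ in exchange is only required \emph{later} in the simulator's schedule.

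Formally, I would regard $\cal A$ as producing the pair $(\xi_0, \xi_1)$ together with auxiliary (quantum) output $(z_0, z_1)$, and use the quantum predicate
\begin{align*}
V\bigl((\xi_0, \xi_1), (\Theta_0, \Theta_1), (z_0, z_1)\bigr) := V_0(x_0, a_0, \Theta_1, z_0) \wedge V_1(x_1, a_1, \Theta_0, z_1) \, ,
\end{align*}
where $x_0, x_1, a_0, a_1$ are parsed out of $\xi_0, \xi_1$. Theorem~\ref{thm:multiplemar} with $n=2$ then yields a three-stage black-box simulator $\cal S$ that outputs $(\pi, \xi_{\pi(1)})$ in stage~1, outputs $\xi_{\pi(2)}$ upon input $\Theta_{\pi(1)}$ in stage~2, and outputs $(z_0, z_1)$ upon input $\Theta_{\pi(2)}$ in stage~3; it succeeds with probability at least $(2q+1)^{-4}$ times that of $\cal A$ when $(\Theta_0, \Theta_1)$ is uniform and independent.

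The reduction $\hat{\cal P}$ interleaves $\cal S$ with a single external $\Sigma_b$-execution as follows. It runs stage~1 of $\cal S$ to obtain $\pi$ and $\xi_{\pi(1)}$, from which it reads off $x_0, x_1$ and $a_{\pi(1)}$. It sets $b := \pi(1)$, outputs $(b, x_0, x_1)$, and sends $a_b$ as the first message of $\Sigma_b$ to ${\cal V}_b$, receiving a uniformly random challenge $c_b$. On a real run this $c_b$ would equal $H(\xi_{1-b})$, i.e., it is exactly the reprogramming target $\Theta_{\pi(2)} = \Theta_{1-b}$ that $\cal S$ expects only in stage~3. Accordingly, in stage~2 $\hat{\cal P}$ feeds $\cal S$ a freshly sampled uniform $\Theta_{\pi(1)}$ (playing the role of the unused challenge of the other branch), then in stage~3 feeds $c_b$ as $\Theta_{\pi(2)}$, obtains $(z_0, z_1)$, and finally forwards $z_b$ to ${\cal V}_b$.

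The analysis is then routine bookkeeping. The pair $(\Theta_0, \Theta_1)$ supplied to $\cal S$ is uniform and independent, since one coordinate is drawn uniformly by $\hat{\cal P}$ itself and the other by ${\cal V}_b$, matching the hypothesis of Theorem~\ref{thm:multiplemar}. Whenever $\cal S$ succeeds, both conjuncts of $V$ hold with these $\Theta_i$'s, so in particular $V_b(x_b, a_b, c_b, z_b) = 1$ and ${\cal V}_b$ accepts. Summing the Theorem~\ref{thm:multiplemar} bound over the free choices of $a_0, a_1$---exactly as in the proof of Corollary~\ref{cor:mFS}---yields the claimed $(2q+1)^{-4}$ factor in front of $\cal A$'s success probability. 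The only delicate point is to check that $\hat{\cal P}$ can commit to $b$ and to $a_b$ before having to output anything that depends on $\cal A$'s second extracted message; this is precisely what the over-cross structure, combined with the guarantee that $\Theta_{\pi(2)}$ is consumed only in the \emph{last} stage of $\cal S$, makes possible.
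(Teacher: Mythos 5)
Your proposal is correct and follows essentially the same route as the paper: apply Theorem~\ref{thm:multiplemar} with $n=2$ to the extractable inputs $(0,x_0,x_1,a_0)$ and $(1,x_0,x_1,a_1)$, set $b=\pi(1)$, locally sample the unused challenge $\Theta_{\pi(1)}=c_{1-b}$, and feed the verifier's challenge $c_b$ as $\Theta_{\pi(2)}$ in the last stage, exploiting the over-cross structure. Your explicit remarks on duplicate-freeness via the leading bit and on summing over $a_0,a_1$ are details the paper leaves implicit, but the argument is the same.
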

As explained above, the execution $(b,x_0,x_1,v_b) \leftarrow \langle\hat{\cal P}^{\cal A} , {\cal V}_b\rangle$ should be understood in that $\hat{\cal P}^{\cal A}$ first outputs $x_0,x_1$ and $b$, and then it engages with ${\cal V}_b$ to execute $\mathsf{\Sigma}_b$ on instance $x_b$. Thus, the statement ensures that if ${\cal A}^H$ succeeds to produce a convincing proof $\pi_{\vee}$ then $\hat{\cal P}^{\cal A}$ succeeds to convincingly run $\mathsf{\Sigma}_0$ {\em or} $\mathsf{\Sigma}_1$ (with similar success probability), where it is up to $\hat{\cal P}^{\cal A}$ to choose which one it wants to do. 

Of course, the statement translates to the {\em static} setting where the two instances $x_0$ and $x_1$ are {\em fixed} and not produced by the dishonest prover.

\begin{proof}
The algorithm ${\cal A}$ fits well into the statement of Theorem \ref{thm:multiplemar} with the two extractable inputs $\tilde x_0 = (0,x_0,x_1,a_0)$ and $\tilde x_1 = (1,x_0,x_1,a_1)$. Thus, we can consider the 3-stage algorithm ${\cal S}$ ensured by Theorem \ref{thm:multiplemar}, which behaves as follows with at least the probability given by the right hand side of the claimed inequality. In the first stage, it outputs a permutation on the set $\{0,1\}$, which we represent by a bit $b \in \{0,1\}$ with $b=0$ corresponding to the identity permutation, as well as $\tilde x_b = (b,x_0,x_1,a_b)$. 
On input a random $\Theta_b = c_{1-b}$ (``locally'' chosen by $\hat{\cal P}$), ${\cal S}$ then outputs $\tilde x_{1-b} = (1-b,x_0,x_1,a_{1-b})$. Finally, on input a random $\Theta_{1-b} = c_b$ (provided by ${\cal V}_b$ as the challenge upon the first message $a_b$), ${\cal S}$ outputs $z_0,z_1$ so that $V_{\vee}$ is satisfied with the challenges $c_b$ and $c_{1-b}$, and thus in particular $V_b\bigl(x_b,a_b,c_b,z_b\bigr)$ is satisfied. 
This directly shows the existence of $\hat{\cal P}$ as claimed. 

\qed\end{proof}

	\section{Acknowledgement}

	We thank Dominque Unruh for hinting towards the possibility of the improved Theorem 2 (compared to [DFMS19]), see also Footnote 8, and 
	Andreas H\"ulsing
	for helpful discussions. 
%
%
CM was funded by a NWO VENI grant (Project No. VI.Veni.192.159). SF was partly supported by the EU Horizon 2020 Research and Innovation	Program Grant 780701 (PROMETHEUS). JD was funded by
ERC-ADG project 740972 (ALGSTRONGCRYPTO).

	\bibliographystyle{alpha}
	\bibliography{QROM}

	\begin{appendix}

		\section{Quantum extractability of q2 identification schemes}

		A class of identification schemes that is of particular interest are so-called q2-identification schemes. The NIST candidate signature scheme MQDSS, for example, is obtained from such an identification scheme via the multi-round Fiat-Shamir transformation from Definition \ref{def:mFS-sig} (with some additional strings included in the hash arguments). In this section, we will prove that a PCIP with a so-called ``q2 extractor'' \cite[Definition 4.6]{MQDSS} is a quantum proof of knowledge if it has an additional collapsingness property. This is necessary for its Fiat-Shamir transformation to  fulfill (s)UF-CMA in the QROM (for (s)UF-CMA in the ROM, the q2-extractor alone is sufficient \cite{MQDSS}).
		
		We begin by defining q2 identification schemes and their extractors. 
		\begin{defi}
			A 5-round identification scheme is a q2 identification scheme, if the second challenge is a single bit. A q2 identification scheme is called q2-extractable if there exists a polynomial-time algorithm that, on input four transcripts $t^{(i)}=(a^{(i)}_1,c^{(i)}_1,a^{(i)}_2, c^{(i)}_2,z^{(i)})$, $i=1,2,3,4$, such that
			\begin{align}
			\begin{split}
			c^{(1)}_1=c^{(2)}_1&\neq	c^{(3)}_1=c^{(4)}_1\ \mathrm{ and}\\
			c^{(1)}_2=c^{(3)}_3&\neq	c^{(2)}_2=c^{(4)}_2,
			\end{split}\label{eq:q2-ext}
			\end{align}
			outputs the secret key with non-negligible probability.
		\end{defi}
	For ease of exposition we have assumed that the different  challenges of a single PCIP come all from the same challenge space. A q2 identification scheme can be brought into this form by having the prover compute the second challenge by selecting the first bit of an augmented second challenge that is as large as the first one.
		For classical provers, four transcripts as required by the above definition can be obtained by straightforward rewinding. In the following, we show that, if the q2 identification scheme has an additional property similar to the quantum-computationally unique responses property introduced in \cite{DFMS19,LZ19}, then the existence of a q2 extractor implies that there exists a quantum extractor. This makes the scheme a quantum proof of knowledge. The argument follows the same lines as the one given in \cite{DFMS19} to prove that $t$-soundness and quantum-computationally unique responses imply the quantum proof-of-knowledge-property, which in turn is an extension of the result by Unruh for \sigps with perfect unique responses \cite{Unruh2012}.
		
		Recall the definition of a collapsing relation, \cite[Definition 23]{DFMS19}, a generalization of the notion of a collapsing hash function \cite{Unruh2016}. We define the notion of  collapsingness for interactive proof systems as follows:
		\begin{defi}
			A $(2n\!+\!1)$-round interactive proof system $\mathsf\Pi$ is called collapsing, if the relation $R_{\mathsf \Pi}:{\cal X}\times {\cal Y}\to \{0,1\}$ with ${\cal X}=\mathcal C^n\times {\cal A}_1$ and ${\cal Y}={\cal A}_2\times...\times {\cal A}_n\times{\cal Z}$ given by the verification predicate $V_{\mathsf \Pi}$ of  $\mathsf \Pi$ is collapsing from ${\cal X}$ to ${\cal Y}$.
		\end{defi}
		Note that for $n=1$, this notion of collapsingness coincides with the notion of quantum-computa-\switch{}{\linebreak}tionally unique responses from \cite{DFMS19}.
		
		Given a q2-identification scheme $\mathsf \Pi$, consider the following straightforward (first stage of a) quantum extractor $\cal E_{\mathsf \Pi}^\mathcal A$. The extractor runs the prover $\mathcal A$ using honestly sampled challenges  to obtain a first transcript $t^{(1)}$. Now it rewinds three times and reruns $\mathcal A$, each time with a fresh pair of challenges, chosen such as to obtain $t^{(i)}$, $i=2,3,4$ such that the four transcripts fulfill the conditions \eqref{eq:q2-ext}. For this extractor, we obtain the following
		
		\begin{thm}
			Let $\mathsf \Pi$ a q2-extractable q2-identification scheme that is also collapsing. Then the success probability of the extractor ${\cal E}_{\mathsf \Pi}^\mathcal A$ is lower-bounded in terms of the success probability of the prover $\cal A$ as
			\begin{equation}
			\Pr[{\cal E}_{\mathsf \Pi}^\mathcal A\ \mathrm{extracts}]\ge \left(\Pr\bigr[ v = accept :(x,v) \leftarrow \langle{\cal A} , {\cal V}_{\mathsf \Pi}\rangle\bigl] \right)^7
			\end{equation}
		\end{thm}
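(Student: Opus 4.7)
The plan is to adapt the quantum rewinding argument from~\cite{DFMS19} for $t$-special-sound $\Sigma$-protocols with collapsing responses to the five-round q2 setting. The exponent $7 = 2\cdot 4 - 1$ already foreshadows the structure: a single initial run of $\mathcal{A}$ contributes a factor $p$, and each of the three subsequent rewinding phases contributes an additional factor $p^2$, so that chaining yields $p\cdot(p^2)^3 = p^7$.

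First I would view $\mathcal{A}$ as a purified sequence of unitaries $U_1$, $U_2(c_1)$, $U_3(c_1,c_2)$ that read challenges from designated registers and write the prover messages $a_1$, $a_2$, $z$ to further registers, with acceptance captured by a projector $\Pi_{\mathrm{acc}}(c_1,c_2)$; the success probability is $p=\mathbb{E}_{c_1,c_2}\,\|\Pi_{\mathrm{acc}}(c_1,c_2)\,U_3 U_2 U_1\ket{\phi_0}\|^2$. The extractor runs $\mathcal{A}$ with uniform $(c_1^{(1)},c_2^{(1)})$ and measures to obtain $t^{(1)}$ (with probability $p$); then rewinds to just after $U_2$, measures $(a_1,a_2)$, applies $U_3(c_1^{(1)},\bar c_2^{(1)})$ with the flipped bit, and reads off $t^{(2)}$; rewinds further to just after $U_1$, measures $a_1$, reruns with a fresh $c_1^{(3)}\neq c_1^{(1)}$ and $c_2^{(3)}=c_2^{(1)}$ to produce $t^{(3)}$; and applies one more $c_2$-flip on top to produce $t^{(4)}$. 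Because $U_2,U_3$ only read $a_1$ (resp.\ $a_2$) from their registers, measuring those registers commutes with the subsequent unitaries, so the intermediate measurements are well defined.

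The quantitative core is the quantum rewinding lemma underlying Lemma~25 of~\cite{DFMS19}: for any state $\ket\psi$ and family of accept-projectors $\{\Pi_c\}_c$ with $\mathbb E_c \|\Pi_c\ket\psi\|^2 = p$, a rewound run with a fresh $c'$, performed after a computational-basis measurement of a \emph{header} that is collapsed by the underlying relation, succeeds with probability at least $p^2 - \mathsf{negl}$. I would invoke this lemma three times: with header $(a_1,a_2)$ for the $c_2$-rewinding producing $t^{(2)}$, with header $a_1$ for the $c_1$-rewinding producing $t^{(3)}$, and once more with header $(a_1,a_2^{(3)})$ for the $c_2$-rewinding producing $t^{(4)}$. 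The collapsing hypothesis of the theorem, applied to $R_{\mathsf\Pi}$ whose $X$-side is $(c_1,c_2,a_1)$, directly yields the $a_1$-header case after restricting to a fixed challenge pair; the $(a_1,a_2)$-header case reduces to the same statement applied at the stage just after $U_2$, using again that $U_3$ only reads $a_2$. The q2-extractor applied to the resulting four transcripts---which by construction share $a_1$ across all four indices and exhibit the required challenge pattern---then outputs the witness with the claimed probability.

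The main obstacle will be the \emph{composition} of the three rewinding steps. After the first rewinding, the state on which the next rewinding operates is no longer the original prover's final state but a branch obtained through one measurement and one unitary substitution; the collapsing guarantee must therefore be reinvoked on a state that is itself the output of an earlier rewinding. A hybrid argument, structurally identical to the nested rewinding for $t$-soundness in~\cite{DFMS19}, is needed to show that (i) the cumulative disturbance from the four measurements and three unitary substitutions stays within the collapse tolerance, and (ii) each challenge-flipping substitution corresponds cleanly to an application of the rewinding lemma on the appropriate intermediate state. Verifying that the branching pattern required by the q2-extractor matches the two-level rewinding---to just after $U_1$ (to obtain independent $c_1$-branches) and to just after $U_2$ (to flip $c_2$ within each branch)---closes the argument.
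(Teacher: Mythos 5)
Your proposal is correct and follows essentially the same route as the paper, which itself gives no details and simply states that the proof is ``essentially the same as for Theorem~25 in [DFMS19]'', i.e.\ the nested Unruh-style rewinding with collapsingness controlling the disturbance of the intermediate computational-basis measurements; your two-level rewinding tree (branch on $c_1$, then on $c_2$ within each branch) produces exactly the four-transcript challenge pattern required by the q2-extractor, and your accounting $p\cdot(p^2)^3=p^7$ matches the $p^{2t-1}$ bound of that theorem for $t=4$.
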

		
		The proof of this theorem is essentially the same as for Theorem 25 in \cite{DFMS19}, which is a slight modification of an argument from \cite{Unruh2012}.
		
		As a corollary, we obtain the fact that for q2 identification schemes, q2-extractability and collapsingness imply the quantum proof of knowledge property as defined in \cite{Unruh2012}.
		\begin{cor}
			Let $\mathsf \Pi$ a q2-extractable q2-identification scheme that is also collapsing. Then it is a quantum proof of knowledge.
		\end{cor}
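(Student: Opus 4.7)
The plan is to derive the corollary as an essentially immediate consequence of the preceding theorem, by checking that the bound it provides fits the formal definition of a quantum proof of knowledge from \cite{Unruh2012}. First, I would recall that definition: a protocol is a quantum proof of knowledge if there exists a polynomial-time quantum extractor $\cal E$ such that for every (quantum) prover $\cal A$ that makes the verifier accept with probability $\epsilon$, the extractor, given black-box access to $\cal A$, outputs a valid witness with probability at least $p(\epsilon) - \nu$ for some fixed polynomial $p$ (nonzero on non-negligible inputs) and some negligible $\nu$. The standard templates in this line of work (in particular~\cite{Unruh2012,DFMS19}) allow $p$ to be a monomial such as $\epsilon^d$ for a constant~$d$.

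Next I would verify efficiency of ${\cal E}_{\mathsf \Pi}^{\cal A}$. The construction runs ${\cal A}$ four times, rewinding to the point just before the respective challenges, and keeping intermediate (quantum) snapshots so that the rewinding is realized coherently; all four runs use only honestly sampled challenges with prescribed coincidence/difference patterns among the $c^{(i)}_1$ and $c^{(i)}_2$. This costs $O(1)$ invocations of $\cal A$ plus classical bookkeeping, followed by one call to the polynomial-time q2-extractor that ships with $\mathsf\Pi$. Hence ${\cal E}_{\mathsf\Pi}^{\cal A}$ is polynomial-time.

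The main step is then to invoke the preceding theorem, which asserts
\begin{equation*}
\Pr[{\cal E}_{\mathsf\Pi}^{\cal A}\ \text{extracts}] \;\geq\; \bigl(\Pr[v=\mathrm{accept}:(x,v)\leftarrow\langle{\cal A},{\cal V}_{\mathsf\Pi}\rangle]\bigr)^{7}.
\end{equation*}
Since the right-hand side is a polynomial (indeed a monomial) in the prover's success probability, the triple $({\cal E}_{\mathsf\Pi}^{\cal A}, p(\epsilon)=\epsilon^{7}, \nu=0)$ satisfies the Unruh definition verbatim. In particular, whenever $\cal A$ succeeds with non-negligible probability, ${\cal E}_{\mathsf\Pi}^{\cal A}$ produces a witness with non-negligible probability, which is the desired conclusion.

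The only part that is not entirely routine is ensuring that the hypotheses of the preceding theorem are met for an \emph{arbitrary} quantum prover $\cal A$: concretely, that collapsingness of $\mathsf\Pi$ is what lets us perform the four rewound runs without destroying the correlations needed for the q2-extractor to succeed, and that the polynomial blow-up from $\epsilon$ to $\epsilon^{7}$ absorbs the various measurement-induced losses. Since the theorem already encapsulates this analysis (its proof mirroring Theorem~25 of~\cite{DFMS19} and the $\Sigma$-protocol argument of~\cite{Unruh2012}), no additional work is needed here; the corollary is a direct reading of the theorem in the language of \cite{Unruh2012}.
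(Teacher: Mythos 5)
Your proposal is correct and matches the paper's treatment: the corollary is presented there as an immediate consequence of the preceding theorem, exactly as you argue, with the $\epsilon^{7}$ bound plugged directly into Unruh's definition of a quantum proof of knowledge. The only difference is that you spell out the routine checks (polynomial running time of ${\cal E}_{\mathsf\Pi}^{\cal A}$ and the form of the polynomial $p$) that the paper leaves implicit.
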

		
		In particular, the 5-round identification scheme $\mathsf\Pi_{\mathrm{SSH}}$ from \cite{SSH} which is used to construct the post-quantum digital signature scheme MQDSS has these properties under plausible assumptions, namely that it is instantiated with the standard hash-based commitment scheme using a collapsing hash function \cite{Unruh2016} (see discussion towards the end of Section \ref{sec:sig}). For MQDSS, this is no additional assumption, as the Fiat-Shamir transformation uses the QROM anyway, and a quantum accessible random oracle is collapsing by~\cite{Unruh2016}.
		
		\begin{cor}\label{cor:SSH-PoK}
			If the 5-round identification scheme from \cite{SSH} is instantiated with the standard hash-based commitment scheme using a collapsing hash function, it is a quantum proof of knowledge.
		\end{cor}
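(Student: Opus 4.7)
The plan is to invoke the preceding corollary, which states that any q2-extractable q2-identification scheme that is also collapsing is a quantum proof of knowledge. Thus it suffices to verify the two hypotheses for $\mathsf\Pi_{\mathrm{SSH}}$ when instantiated with the standard hash-based commitment scheme built from a collapsing hash function.

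First I would recall that q2-extractability of $\mathsf\Pi_{\mathrm{SSH}}$ is already established in \cite{MQDSS, SSH}: from four accepting transcripts satisfying the pattern \eqref{eq:q2-ext} one can, using the special algebraic structure of the scheme (the two different first challenges together with both bits for the second challenge), recover a preimage of the underlying MQ system, which is the secret key. This part is purely classical and carries over verbatim, provided that the commitments opened in the four transcripts are truly binding on the committed strings. The latter is ensured in the standard model by the fact that a collapsing hash function is in particular collision resistant, so the standard hash-based commitment is computationally binding.

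Second, I would verify the collapsing property of the interactive proof system $\mathsf\Pi_{\mathrm{SSH}}$ in the sense of the definition given just above. What has to be shown is that the relation $R_{\mathsf\Pi_{\mathrm{SSH}}}$ defined by the verification predicate is collapsing, i.e.~that no efficient quantum adversary, given an accepting superposition of responses consistent with a fixed $(c_1,c_2,a_1)$, can distinguish whether the response register has been measured. The key observation is that in $\mathsf\Pi_{\mathrm{SSH}}$ the first message $a_1$ consists of two hash-based commitments, and the subsequent messages $a_2,z$ are, conditioned on the challenges, completely determined by the strings committed to in $a_1$ together with their opening randomness (this is exactly the structure that was already invoked in the proof sketch of Theorem \ref{thm:(s)UFCMA} to get computationally unique responses). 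Consequently, collapsing the $(a_2,z)$-register reduces, via a straightforward hybrid argument over the commitments appearing in $a_1$, to collapsingness of the standard hash-based commitment scheme. The fact that the latter scheme is collapsing (as a function of committed value and opening) when built from a collapsing hash is a standard consequence of \cite{Unruh2016} and is exactly the strengthened property noted in the paragraph preceding Corollary \ref{cor:SSH-PoK}.

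Combining the two, $\mathsf\Pi_{\mathrm{SSH}}$ satisfies the hypotheses of the preceding corollary and is therefore a quantum proof of knowledge. The main subtlety to get right will be the second step: the hybrid argument reducing collapsingness of the whole relation to collapsingness of a single commitment must correctly account for the fact that the register being measured includes both the openings revealed in $a_2,z$ and the algebraic response data, and that the consistency checks performed by the verifier are compatible with the hybrids. Everything else is an unpacking of definitions and of results already in the excerpt.
\qed
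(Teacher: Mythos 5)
Your proposal is correct and follows essentially the same route as the paper: cite \cite{MQDSS} for q2-extractability, then derive collapsingness of $\mathsf\Pi_{\mathrm{SSH}}$ from the commit-and-open structure together with the collapsingness of the hash-based commitment (including the opening information), the paper phrasing your hybrid step as the observation that measuring a function of a register is a partial computational-basis measurement of that register. The only cosmetic difference is that the paper's sketch is terser and does not separately discuss binding for the classical extraction step.
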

		\begin{proof}[sketch]
			According to \cite{MQDSS}, $\mathsf\Pi_{\mathrm{SSH}}$ is a q2-extractable q2 identification scheme. In $\mathsf\Pi_{\mathrm{SSH}}$, the honest prover's first message consists of two commitments, and the second and final messages contain functions of  the strings commited to in the first message, and some opening information, respectively. Measuring a function of a register is equivalent to a partial computational basis measurement of that register. According to the the collapsing property of the  hash function, no efficient algorithm can distinguish whether the the committed string and the opening information are measured or not. This clearly implies the same indistinguishability for partial measurements of the string register, which implies that $\mathsf\Pi_{\mathrm{SSH}}$ is collapsing.
		\qed\end{proof}
		Note that the above proof works for any multi-round PCIP that has a similar commit-and-open structure.

	\end{appendix}
	
\end{document}